\documentclass[11pt]{article}

\usepackage[margin=1in]{geometry}
\usepackage{amsmath,amssymb,amsthm,mathtools}
\usepackage{thmtools,thm-restate}
\usepackage{framed}
\newcommand{\yijunresolved}[1]{}
\newcommand{\claude}[1]{}
\usepackage{enumitem}
\usepackage{microtype}
\usepackage{url}
\usepackage{xcolor}
\usepackage[hidelinks]{hyperref}
\usepackage[nameinlink,noabbrev]{cleveref}
\crefname{theorem}{Theorem}{Theorems}
\crefname{lemma}{Lemma}{Lemmas}
\crefname{observation}{Observation}{Observations}
\crefname{proposition}{Proposition}{Propositions}
\crefname{corollary}{Corollary}{Corollaries}
\crefname{section}{Section}{Sections}
\crefname{equation}{Equation}{Equations}

\newtheorem{theorem}{Theorem}
\newtheorem{lemma}[theorem]{Lemma}
\newtheorem{proposition}[theorem]{Proposition}
\newtheorem{corollary}[theorem]{Corollary}

\newtheorem{observation}[theorem]{Observation}
\theoremstyle{definition}
\newtheorem{definition}[theorem]{Definition}
\theoremstyle{remark}

\newcommand{\eps}{\varepsilon}
\newcommand{\EE}{\mathbb{E}}
\newcommand{\Prb}{\mathbb{P}}

\newcommand{\vbl}{\operatorname{vbl}}
\newcommand{\poly}{\operatorname{poly}}
\newcommand{\1}{\mathbf{1}}

\newif\ifshowcomments
\showcommentstrue

\newcommand{\makecomment}[3]{%
  \ifshowcomments
    \expandafter\newcommand\csname #1\endcsname[1]{%
      \textcolor{#3}{\textsf{\textbf{[#2:} ##1\textbf{]}}}}%
  \else
    \expandafter\newcommand\csname #1\endcsname[1]{}%
  \fi
}

\makecomment{hung}{Hung}{red!80!black}
\makecomment{nima}{Nima}{blue!80!black}
\makecomment{yijun}{Yijun}{green!45!black}
\makecomment{todonote}{TODO}{orange!85!black}

\title{Communication-Efficient $(1+\eps)\Delta$-Edge Coloring \\
and Lov\'asz Local Lemma}

\author{Yi-Jun Chang\footnote{National University of Singapore. ORCID: 0000-0002-0109-2432. Email: cyijun@nus.edu.sg}  \and Nima Dolatabadi\footnote{University of Copenhagen. Supported by VILLUM Foundation grant 54451, Basic Algorithms Research Copenhagen (BARC). Most of this work was completed while the author was at the National University of Singapore. ORCID: 0009-0000-0928-7499. Email: nima.dolatabadi@di.ku.dk}  \and Hung Thuan Nguyen\footnote{National University of Singapore. ORCID: 0009-0006-7993-2952.  Email: hung@u.nus.edu}}

\date{}

\begin{document}
\renewcommand{\yijun}[1]{{\color{red!75!black}\small\textsf{[Yi-Jun, OPEN] #1}}}
\maketitle

\begin{abstract}
We study edge coloring in the two-party edge-partition model, where Alice and
Bob each know part of the edge set and must jointly produce a proper coloring
with little communication. Previous work gave a deterministic
$(2\Delta-1)$-edge-coloring protocol using $O(n)$ bits, leaving open whether fewer
colors can be obtained efficiently.

We simultaneously reduce both the number of colors and the communication. For every fixed $\eps>0$ and all sufficiently large $\Delta$, we give a
public-coin Las Vegas protocol that finds a proper
$(1+\eps)\Delta$-edge coloring using
$O(ne^{-\gamma\Delta} + 1)$ expected bits and
$O\left(\frac{\log n}{\Delta}+1\right)$ expected rounds, where $\gamma>0$
depends only on $\eps$. 
Thus, the expected communication is $o(n)$ when $\Delta=\omega(1)$
and $O(1)$ when $\Delta\ge C_\varepsilon\log n$,
for a sufficiently large constant $C_\varepsilon$.
 Using only
private coins adds $O(\log n)$ expected bits.

Our key idea is a new randomized coloring procedure that allows Alice and Bob
to color their edges using essentially the same color space with only a small
amount of coordination, so most of their random choices remain private. To
make this procedure succeed, we develop a communication-efficient constructive
Lov\'asz local lemma (LLL) for two parties.  

Our two-party constructive LLL is also of independent interest. We illustrate
its broader applicability by applying it to standard LLL formulations of
several other classical problems, obtaining communication-efficient two-party
protocols.
\end{abstract}

\thispagestyle{empty}
\newpage
\bigskip
\tableofcontents
\bigskip
\thispagestyle{empty}

\newpage
\pagenumbering{arabic}

\section{Introduction}
\label{sec:introduction}

The two-party communication model~\cite{Yao79,KN97} asks how much
information two parties need to exchange in order to solve a problem when
each party sees only part of the input. Local computation is free, and the
main measure of complexity is the total number of bits communicated; we also
consider the number of communication rounds. We count each message sent by
one party to the other as one round, with Alice and Bob alternating as the
sender.

Randomized protocols may use either \emph{public randomness}, visible to both parties, or only \emph{private randomness}, where each party has its own independent random bits. A \emph{Las Vegas} protocol always outputs a correct solution and terminates with probability one; its communication and number of rounds are measured in expectation. By Newman's theorem~\cite{Newman91,KN97}, any public-coin Las Vegas protocol on a finite input space can be converted to a private-coin protocol with only an additional $O(\log\log K)$ expected bits of communication, where $K$ is the number of possible input pairs.

For graph problems, a natural setting is the \emph{edge-partition model}. Both parties know a common vertex set $V$, while the edge set $E$ of a graph $G=(V,E)$ is partitioned into two disjoint sets $E_A$ and $E_B$. Alice knows $E_A$, Bob knows $E_B$, and neither party knows the edges held by the other. The two parties communicate in order to jointly solve a problem on $G$. This model and closely related two-party graph models have been studied for connectivity, matching, shortest paths, cut problems, and graph coloring~\cite{HajnalMaassTuran88,IvanyosKlauckLeeSanthaWolf12,BlikstadBrandEfronMukhopadhyayNanongkai22,BlikstadJiangMukhopadhyayYingchareonthawornchai25,ChengJSY26,JiangNanongkaiSawettamalya26,RadhakrishnanReddyVenkat26,FlinMittal25,ChangMNS25}.

\subsection{Edge coloring}
Our main problem is edge coloring. A \emph{proper edge coloring} assigns a color to every edge so that any two edges sharing an endpoint receive different colors. In the edge-partition model, Alice colors the edges in $E_A$ and Bob colors the edges in $E_B$, and the union of their outputs must form a proper edge coloring of the entire graph $G=(V, E_A \cup E_B)$. Throughout the paper, we assume that the maximum degree $\Delta$ of $G$ is known to both parties.

For a simple graph of maximum degree $\Delta$, at least $\Delta$ colors may be necessary, while Vizing's theorem~\cite{Vizing64} guarantees that $\Delta+1$ colors always suffice. We ask how closely Alice and Bob can approach this bound without having to exchange a large amount of information about their respective edge sets.

Chang, Mishra, Nguyen, and Salim~\cite{ChangMNS25} initiated the study of edge coloring in the two-party edge-partition model, where Alice and Bob each color their own edges and the union of their outputs must be a proper edge coloring of the entire graph. They showed that $2\Delta$ colors can be achieved without any communication, while $2\Delta-1$ colors can be achieved deterministically using $O(n)$ bits and $O(1)$ rounds. Moreover, for constant $\Delta$, they proved a matching $\Omega(n)$ lower bound for $(2\Delta-1)$-edge coloring. Thus, $2\Delta-1$ is exactly the largest palette size for which linear communication can be necessary: with one additional color, their zero-communication protocol already applies.

This leaves a large gap between the $2\Delta-1$ colors achievable with linear communication and the $\Delta+1$ colors guaranteed by Vizing's theorem. They identified reducing the number of colors as a natural open direction, asking in particular for the communication complexity of $(\Delta+1)$-edge coloring.

In this work, we show that one can approach the Vizing bound while maintaining the $O(n)$ communication complexity. Perhaps surprisingly, substantially less communication is possible when the maximum degree grows: our communication complexity becomes $o(n)$ whenever $\Delta=\omega(1)$.

\begin{restatable}[Two-party edge coloring]{theorem}{thmmain}
\label{thm:main}
For every fixed $\eps\in(0,1)$, there is a constant
$\Delta_0=\Delta_0(\eps)$ such that, for every $n$-vertex simple graph
$G$ of maximum degree at most $\Delta\ge\Delta_0$, there is a Las Vegas
two-party protocol that produces a proper
$(1+\eps)\Delta$-edge coloring in
$O\left(\frac{\log n}{\Delta}+1\right)$ expected rounds.
For some constant $\gamma>0$ depending only on $\eps$, the expected
communication is $O(ne^{-\gamma\Delta} + 1)$ with public randomness and
$O(ne^{-\gamma\Delta}+\log n)$ using only private randomness.
\end{restatable}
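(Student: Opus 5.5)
We plan to prove \Cref{thm:main} by a nibble-style randomized coloring in which Alice and Bob share, via public coins, a coarse coordination structure and otherwise act privately. This is followed by a two-party constructive LLL that repairs the rare local failures. Private-coin protocols are then obtained by Newman's theorem: the number of input pairs is $K\le 2^{O(n^2)}$, so the conversion costs $O(\log\log K)=O(\log n)$ extra bits.

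\textbf{Step 1 (shared palette split).} Using public randomness, the parties agree on a random assignment of each color of the palette $[(1+\eps)\Delta]$, at each vertex, to ``Alice'' or ``Bob''. Concretely, we hash the pair (vertex, color) to one of the two parties, with probabilities tuned locally so that each party gets proportionally to its local degree, plus $\eps/3$ slack. Computing the exact local degree split needs communication, so we use a fixed split of about $(1+\eps)\Delta/2$ colors per side. Alice colors each edge $uv\in E_A$ with a color owned by Alice at both $u$ and $v$; Bob does the same for $E_B$. Conflicts between the two parties are then impossible by construction. Each party faces a purely local list-edge-coloring problem in which every edge has an expected list size of about $(1+\eps)^2\Delta/4$. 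This is insufficient when one party holds most of the degree at a vertex, so the split must instead be made \emph{adaptively}. The parties exchange nothing per vertex. Each party runs a Rödl-nibble or random-trial procedure over the \emph{full} palette, with public coins used only to fix a random permutation or priority on colors. A color tentatively chosen by Alice at $v$ is kept only if Bob does not also want it, which is decided by comparing public-coin ranks.

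\textbf{Step 2 (local failure events).} For each vertex $v$, we define a bad event $B_v$: after the randomized procedure with the shared tie-breaking rule, some edge at $v$ remains uncolored, or the residual list sizes deviate from expectation. Standard concentration (Chernoff and Talagrand over the private choices near $v$) gives $\Pr[B_v]\le e^{-\gamma'\Delta}$. Each $B_v$ depends only on randomness within distance $O(1)$ of $v$, so the dependency degree is $\poly(\Delta)$. The LLL condition $e\,p\,(d+1)<1$ then holds for $\Delta\ge\Delta_0(\eps)$.

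\textbf{Step 3 (two-party constructive LLL; main obstacle).} This is the core difficulty. An event $B_v$ straddles both parties' edges and variables, so neither party can check it alone. We would have each party evaluate the "half" of each event it can see, and exchange only the identities of vertices where its half is suspicious. Such vertices are found with probability $e^{-\Omega(\Delta)}$ each, and encoding them costs $O(\log n)$ bits per flagged vertex, or is compressed via public-coin hashing to $O(1)$ bits amortized. Together these give $O(ne^{-\gamma\Delta}+1)$ expected bits. The parties then run Moser–Tardos resampling on the flagged events and their neighborhoods. Each round of resampling shrinks the set of violated events geometrically by a factor $e^{-\Omega(\Delta)}$ relative to the neighborhood size. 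Starting from expected size $ne^{-\gamma\Delta}$, this gives $O(\log n/\Delta+1)$ rounds. The hard part is guaranteeing Las Vegas correctness: the protocol must certify that no violated event remains without sending $\Omega(n)$ bits. We would try first to make each party's local check one-sided, so that a clean local half implies that no failure is possible at that vertex. Then silence certifies correctness, and the final coloring (completed greedily on the tiny residual graph of maximum degree $\le\eps\Delta/3$ using reserved colors) is proper.
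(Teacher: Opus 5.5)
Your overall architecture (public coordination randomness, private edge-level choices, a two-party constructive LLL, and Newman's theorem for the extra $O(\log n)$ bits) matches the paper's. There is nonetheless a genuine gap. It surfaces where you flag the main obstacle, but it originates in Step 1. The rule that Alice keeps a tentatively chosen color at $v$ only if Bob does not also want it cannot be evaluated by Alice. Whether Bob wants that color depends on Bob's private edges and private trials, and public-coin ranks reveal neither. Making Bob's choices visible would require communicating edge-level information, which the target bound $O(ne^{-\gamma\Delta}+1)$ forbids. Without it, every failure event $B_v$ depends on both parties' private data, so neither party can check it. This is why Step 3 can only hope for one-sided half-checks. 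With no construction providing them, there is no cheap certificate that all events are avoided, and neither the Las Vegas guarantee nor the communication bound follows. (Step 2 has an independent problem: after a round of random trials, ``some edge at $v$ remains uncolored'' is the typical outcome, not an event of probability $e^{-\gamma'\Delta}$.)

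The idea you abandoned in Step 1 is essentially the one that works, and your reason for abandoning it is mistaken: a degree-proportional split needs no communication once you stop requiring it to be complementary. The paper groups most colors into $q=q(\eps)$ regular palettes of $L+1$ colors each, where $\xi=\eps/20$ and $L=\lceil(1+4\xi)\Delta/q\rceil$. It places one shared random permutation $\sigma_v$ of $[q]$ at each vertex. Alice may use the first $s_A(v)=\lceil q(\deg_A(v)/\Delta+\xi)/(1+4\xi)\rceil$ palettes in this order, and Bob the last $s_B(v)$, defined symmetrically. Each party computes its quota from its own degree. Since $\deg_A(v)+\deg_B(v)\le\Delta$, the normalization by $1+4\xi$ leaves room for both cushions and both ceilings, so $s_A(v)+s_B(v)\le q$ and the permission sets are disjoint.

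Each party then privately routes each of its edges to a uniformly random palette permitted to it at both endpoints. If there is none, the edge goes to that party's own separate fallback palette of $\lfloor\xi\Delta\rfloor+1$ colors. Each party colors every routed subgraph locally by Vizing's theorem, so coarse palettes reduce your list-edge-coloring problem to degree bounds. Cross-party conflicts are then impossible for every value of the private randomness. Each bad event (an overloaded palette or fallback at $v$) therefore belongs to one party. It depends only on the permutations $\sigma_u$ for $u$ equal to $v$ or a neighbor of $v$ in that party's subgraph, plus that party's own edge permutations. This gives a partitioned instance with $N=n$, $M=2n$, $p=e^{-c\Delta}$, $h=\Delta+1$, and $D=2\Delta^2+1$.

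Your Step 3 also lacks the ingredients that make the two-party LLL rigorous:
\begin{itemize}
\item exhaustive alternating rounds, so that the joint run is a valid Moser--Tardos execution;
\item public resampling tables, so that only net counter increments are sent, encoded by their stars-and-bars rank;
\item the fact that every resampling in round $j\ge3$ has witness-tree depth at least $j-2$.
\end{itemize}
The last point yields the geometric decay in the number of resamplings per round that your ``shrinks by $e^{-\Omega(\Delta)}$'' claim merely asserts.
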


The communication bound has a particularly sharp dependence on the maximum
degree: once
$\Delta=\Omega(\log n)$ is sufficiently large, $O(1)$ bits of communication suffice! Thus the $\Omega(n)$ lower bound of Chang, Mishra,
Nguyen, and Salim~\cite{ChangMNS25} for constant $\Delta$ cannot extend uniformly to larger degrees.
\Cref{tab:results} compares our result with the upper and lower bounds of
Chang, Mishra, Nguyen, and Salim~\cite{ChangMNS25}.

\begin{table}[ht!]
\centering
\begin{tabular}{lllll}
\hline
& Colors & Communication & Rounds & Randomness\\
\hline
\cite{ChangMNS25}
& $2\Delta$
& $0$
& $0$
& Deterministic\\
\cite{ChangMNS25}
& $2\Delta-1$
& $O(n)$
& $O(1)$
& Deterministic\\
\cite{ChangMNS25}
& $2\Delta-1$
& $\Omega(n)$ for constant $\Delta$
& --
& Randomized\\
\Cref{thm:main}
& $(1+\eps)\Delta$
& $O(ne^{-\gamma\Delta} + 1)$
& $O\left(\frac{\log n}{\Delta}+1\right)$
& Public-coin Las Vegas\\
\Cref{thm:main}
& $(1+\eps)\Delta$
& $O(ne^{-\gamma\Delta}+\log n)$
& $O\left(\frac{\log n}{\Delta}+1\right)$
& Private-coin Las Vegas\\
\hline
\end{tabular}
\caption{Edge coloring in the two-party edge-partition model.}
\label{tab:results}
\end{table}

\subsection{Constructive Lov\'asz local lemma}
\label{sec:results-lll}

Our main technical contribution is a communication-efficient constructive
Lov\'asz local lemma (LLL) for two parties. Roughly speaking, Alice and Bob
each hold a collection of bad events, and their goal is to find an assignment
avoiding all of them while exchanging little information.

We use the standard variable model of the LLL. The instance contains a finite
collection $\mathcal V$ of mutually independent random variables, partitioned
as
$$
\mathcal V=\mathcal X\cup\mathcal V_A\cup\mathcal V_B.
$$
The variables in $\mathcal X$ are shared. Their indexing, domains, and
distributions are known to both parties, and Alice and Bob must agree on
their final values. The variables in $\mathcal V_A$ and $\mathcal V_B$ are
private to Alice and Bob, respectively. Each party knows the domains and
distributions of its own private variables and can sample from them.

The family of bad events is partitioned as
$$
\mathcal B=\mathcal B_A\cup\mathcal B_B.
$$
Alice knows the description of every event in $\mathcal B_A$, including the
variables on which it depends, and can determine whether it occurs from the
current values of these variables. Every such event depends only on variables
in $\mathcal X\cup\mathcal V_A$. Symmetrically, Bob knows and can evaluate
the events in $\mathcal B_B$, each of which depends only on variables in
$\mathcal X\cup\mathcal V_B$. Neither party needs to know the other party's
bad events or private variables.

The goal is to assign values to all variables so that no bad event occurs.
At the end of the protocol, the parties agree on the values of the shared
variables, while each party needs to know only the values of its own private
variables.

For an event $B$, let $\vbl(B)$ denote the set of variables on which it
depends. The \emph{dependency graph} has vertex set $\mathcal B$, with an
edge between distinct bad events $B,B'\in\mathcal B$ if
$\vbl(B)\cap\vbl(B')\neq\varnothing$. The dependency graph need not be known to either party. We write $\Gamma(B)$ for the
neighborhood of $B$ in this graph. We use the following five parameters, which are known to both parties:
\begin{align*}
N &= |\mathcal X|,
&& \text{the number of shared variables},\\
M &\ge |\mathcal B|,
&& \text{an upper bound on the number of bad events},\\
p &\ge \max_{B\in\mathcal B}\Pr[B],
&& \text{an upper bound on the probability of any bad event},\\
D &\ge \max_{B\in\mathcal B}|\Gamma(B)|,
&& \text{an upper bound on the degree of the dependency graph},\\
h &\ge \max_{B\in\mathcal B}|\vbl(B)\cap\mathcal X|,
&& \text{an upper bound on the number of shared variables in any bad event}.
\end{align*}

Without the partition between Alice and Bob, this is the usual variable
setting of the Lov\'asz local lemma. In particular, the symmetric LLL
guarantees that an assignment avoiding all bad events exists when
$ep(D+1)\le 1$.

Our two-party LLL requires only constant slack $\delta$ in this condition, and its
communication bound depends on the number of shared variables but not on the
sizes of their domains.

For the theorem below, we may assume that $Mph>0$. Indeed, if $M=0$, there
are no bad events, while if $p=0$, the initial random assignment avoids all
bad events with probability one. If $h=0$, then no bad event depends on a
shared variable. Under the LLL condition below, Alice and Bob can run the
Moser--Tardos algorithm independently on their respective bad events, and
their two assignments together avoid all bad events. In all three cases, no
communication is needed. Throughout the remainder of the paper, we therefore
assume that $Mph>0$.

\begin{restatable}[Two-party constructive LLL]{theorem}{thmtwoparty}
\label{thm:two-party-lll}
Fix any constant $\delta\in(0,1)$, and suppose
$$
ep(D+1)\le 1-\delta.
$$
There is a public-coin Las Vegas protocol that finds an assignment
avoiding all bad events. The expected communication complexity is
$$
O\left(
1+Mph\log\left(2+\frac{N}{Mph}\right)
\right).
$$
The expected number of rounds is
$$
O\left(
1+\frac{\log(1+A)}{\log(1/\lambda)}
\right),
\quad
\text{where} \quad\lambda=ep(D+1) \quad
\text{and} \quad
A=\frac{Mp}{1-\lambda}.
$$
\end{restatable}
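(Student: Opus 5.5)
We simulate a parallel Moser--Tardos algorithm in which the shared variables are sampled from public randomness, so they never need to be transmitted. Each shared variable $x\in\mathcal X$ receives an infinite public stream of independent samples $x^{(1)},x^{(2)},\dots$ from its distribution. Its current value is always the next unused entry of the stream. Resampling $x$ therefore only requires both parties to know the \emph{index} of $x$, because both can then advance to the next entry locally. Private variables are resampled privately and are never communicated.

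The protocol runs in rounds. In each round, Alice computes the set $\mathcal A_A\subseteq\mathcal B_A$ of her currently occurring bad events, and Bob does the same for $\mathcal B_B$. Each party chooses a maximal independent set of occurring events on its own side. It can do this because two of its own events conflict exactly when they share a variable it knows. The parties then exchange the set $S$ of shared variables touched by the chosen events. The cross-party conflicts occur only through shared variables. To keep the resampled set independent in the global dependency graph, the parties proceed in two phases. First, Alice's chosen events are resampled. Then Bob re-evaluates his events and resamples an independent set of those events that avoid the shared variables Alice just touched. This gives a legitimate parallel Moser--Tardos step, so each round resamples a maximal-independent-like set of occurring events. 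Resampling is done privately for private variables and by advancing public streams for shared ones. Termination is detected with $O(1)$ bits per round: each party announces whether it has any occurring event.

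For the analysis, we invoke the Moser--Tardos witness-tree argument for the parallel version. The expected number of resamplings of events is at most $\sum_B \frac{x_B}{1-x_B}$. With the symmetric choice $x_B\approx e p$ and slack $\lambda=ep(D+1)\le1-\delta$, this is $O(Mp)$. Moreover, the probability that a parallel round $t$ still occurs is at most roughly $A\lambda^t$, because a round-$t$ witness tree has depth $t$ and the sum over such trees decays geometrically. This gives the round bound $O\bigl(1+\log(1+A)/\log(1/\lambda)\bigr)$. Each resampled event touches at most $h$ shared variables, so the expected number of shared-variable resamplings is $O(Mph)$.

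The main obstacle is encoding communication cheaply. Suppose $s$ shared variables are announced in total, out of $N$. Naively this costs $s\log N$ bits. Encoding each round's set as a subset of $[N]$ instead costs $\log\binom{N}{s_t}\approx s_t\log(eN/s_t)$ bits. By concavity and Jensen, with $\EE[s]=O(Mph)$, the expected total is $O\bigl(1+Mph\log(2+N/(Mph))\bigr)$. We encode subsets with a prefix-free code: first the size $s_t$ in Elias-gamma form, then the rank of the subset. The round overhead must also be controlled, since the $O(1)$ termination bits per round would otherwise add $O(\#\text{rounds})$. To handle this, each empty announcement is charged as a single bit. Only rounds with $s_t>0$ incur the subset cost, and the $O(\log(1+A))$ termination overhead is absorbed because $A\lesssim Mp$ and $Mph>0$ with $h\ge1$. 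If this absorption fails for tiny $Mp$, we argue separately that the probability of more than one round is $O(Mp)$.

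Finally, a subtle point is that the witness-tree analysis must hold when the parties' resampling order is the two-phase schedule above, rather than a canonical parallel MT schedule. We would verify that the execution log still satisfies the Moser--Tardos coupling lemma: each logged event's witness tree is consistent with the public and private sample streams. That lemma depends only on the fact that each resampling consumes fresh independent samples, so it holds here.
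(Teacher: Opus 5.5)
Your protocol takes a different route from the paper's. The paper uses exhaustive sequential rounds, its own lemma that a resampling in round $j$ has witness-tree depth at least $j-2$, and a stars-and-bars encoding of multiset counter updates. Your protocol is sound once Bob's set is taken \emph{maximal} among his occurring events that avoid the shared variables Alice just touched. What needs checking is not the coupling lemma in your last paragraph, since that holds for any selection rule. It is the MIS property. Alice's phase leaves the variables of such a Bob event untouched, so Bob's candidates are exactly the Bob events that already occurred at the start of the round and share no variable with Alice's chosen events. Hence the union of the two sets is a genuine maximal independent set of the events occurring at the start of the round. The standard parallel depth lemma then applies verbatim: resamplings in round $t$ have depth at least $t-1$. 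A side benefit is that each per-phase update is a subset of $[N]$ rather than a multiset.

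The step that fails as written is the communication bound. Your cost is $\sum_t g(s_t)$ with $g(s)\approx s\log(eN/s)$. This $g$ is concave with $g(0)=0$, hence subadditive, so $\sum_t g(s_t)\ge g\bigl(\sum_t s_t\bigr)$. Applying Jensen to the total with $\EE[s]=O(Mph)$ therefore bounds the wrong quantity. Using concavity across rounds only gives $\sum_t g(s_t)\le R\,g(s/R)=s\log(eN/s)+s\log R$, which leaves an extra $s\log R$ term.

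The fix needs four pieces:
\begin{enumerate}
\item Apply Jensen per round: $\EE[g(s_t)]\le g(\EE[s_t])$.
\item Use the stronger output of the depth lemma, namely $\EE[s_t]\le hA\lambda^{t-1}$. This is a bound on the expected number of resamplings in round $t$, not just on the probability that round $t$ happens.
\item Use that $g$ is monotone.
\item Prove the summation estimate $\sum_{k\ge0}g(x\lambda^k)=O\bigl(x\log(2+N/x)\bigr)$.
\end{enumerate}
The last estimate is not automatic, because $\log(N/(x\lambda^k))$ grows by $k\log(1/\lambda)$. It follows from $\sum_k k\lambda^k\log(1/\lambda)=O(1)$ under constant slack, with the regime $x>N$ handled separately. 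This is the paper's geometric update coding lemma, and it is the missing ingredient in your argument.
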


We state the theorem only for public randomness. By Newman's theorem, it can
be converted to a private-coin protocol with an additional
$O(\log\log K)$ expected bits of communication, where $K$ is the number of
possible input pairs. Unlike the parameters $(N,M,p,D,h)$ above, however,
$K$ depends on how the LLL instance is represented in a particular
application. We therefore account for this overhead separately when applying
the theorem.

A particularly useful regime is $Mph=O(N)$. In this case, the expected
communication complexity is simply $O(N)$: up to a constant factor, the
parties communicate only one bit per shared variable, even if the shared
variables have arbitrarily large domains.

Our two-party LLL protocol is useful well beyond edge coloring. We apply \Cref{thm:two-party-lll} to several
other classical problems with natural LLL formulations, obtaining
communication-efficient protocols.

\subsection{Technical overview}
\label{sec:overview}

We explain the main ideas behind our two-party communication protocols for LLL and edge coloring.

\paragraph{Communication-efficient Moser--Tardos resampling.}

Our LLL protocol builds on the Moser--Tardos resampling
algorithm~\cite{MoserTardos10}, which repeatedly chooses a bad event that
currently occurs and resamples all variables on which it depends. Directly
simulating this algorithm in the two-party setting would require too much
communication. After Alice resamples one of her events, Bob may need to learn
the new shared values before evaluating his own events, so the parties would
have to synchronize after essentially every resampling. Moreover, transmitting
a shared value can itself be expensive when its domain is large; in our
edge-coloring application, some shared variables are permutations.

Our key departure from this direct simulation is to make the resampling \emph{exhaustive} between communications. Alice repeatedly resamples her occurring bad events until none remains, and only then sends Bob the net change in the shared state. Bob does the same, and the parties alternate. Intermediate states during Alice's round are irrelevant to Bob, since he evaluates his events only after Alice has finished. At the level of individual resamplings, this is still an ordinary Moser--Tardos execution, using a particular adaptive rule for choosing the next violated event.

To avoid transmitting values from large domains, we expose the shared randomness in advance. For every shared variable $X_i$, public randomness generates an infinite resampling table $X_i(0),X_i(1),X_i(2),\ldots$, and a counter $c_i$ specifies its current entry. The parties therefore synchronize only the changes in the counters; the corresponding values can be reconstructed from public randomness. This makes the communication independent of the domain sizes of the shared variables.

The main question is why this exhaustive repair process does not require Alice
and Bob to repeatedly undo each other's progress. Suppose Alice finishes her
round with none of her bad events occurring, but some Alice event occurs again
after Bob's next round. Bob cannot change Alice's private variables, so one of
his resamplings must have changed a shared variable on which this event
depends. Similarly, that Bob resampling can be traced back to an earlier
resampling by Alice. Continuing backwards produces a long alternating chain
of dependent resamplings. Thus, any resampling that occurs in a late round
must be supported by a correspondingly long chain of dependencies.

The standard Moser--Tardos witness-tree analysis captures precisely such
causal chains. Under the LLL condition with slack $ep(D+1)\le 1-\delta$, deep witness trees become
geometrically less likely as their depth increases. Consequently, the
expected number of resamplings decreases geometrically from round to round.
Intuitively, every additional alternation between Alice and Bob requires one
more link in the causal chain, making continued interaction increasingly
unlikely.

It remains to communicate the changes to the shared counters efficiently. In
a round, suppose the shared counters increase a total of $U$ times. Their net
changes form a nonnegative vector with $N$ coordinates and coordinate sum
$U$. If $U=0$, the sender communicates only that the update is empty. If
$U>0$, there are
$\binom{N+U-1}{U}$ possible update vectors. After first communicating $U$,
the sender encodes the update vector by its rank among these possible vectors.
Including the encoding of $U$, this uses
$O\left(U\log\left(e\left(1+N/U\right)\right)\right)$ bits.

Since each resampling increases at most $h$ shared counters, and the expected number of
resamplings decreases geometrically across rounds, summing these costs gives
the communication bound in \Cref{thm:two-party-lll}.

\paragraph{Edge coloring with a small shared interface.}
We next design a randomized edge-coloring procedure to which we can apply our
two-party LLL. Besides making the bad events rare and weakly dependent, we
must ensure that the procedure uses only a small number of shared random
variables.

This requirement is specific to the communication setting. In distributed computing, existing randomized approaches~\cite{DubhashiGrablePanconesi98,ElkinPettieSu15,ChangHLPU18} to $(1+\eps)\Delta$-edge coloring, based on the nibble method and LLL, make random choices associated with individual edges. A direct adaptation creates a dilemma. If the random choices of Alice's and Bob's edges remain private, then a conflict between an Alice edge and a Bob edge depends on private randomness from both parties and does not fit our partitioned LLL model. If instead all edge-level choices are made shared, the resulting LLL may have one shared variable per edge, which is too expensive for the communication bounds we seek.

Our construction resolves this dilemma by separating the randomness used to
coordinate the two parties from the randomness used to balance their local
loads. Only the coordination randomness is shared. We partition most of the
color space into a constant number $q=q(\eps)$ of regular palettes. At each
vertex $v$, the number of palettes permitted to each party is chosen roughly
in proportion to that party's local degree. A shared random permutation
$\sigma_v$ of $[q]$ orders the palettes: Alice is permitted to use a prefix of
this ordering, while Bob is permitted to use a suffix. The two quotas are
chosen so that these two sets are disjoint.

An edge may be assigned to a regular palette only if that palette is permitted
to its owner at both endpoints. This rule immediately prevents conflicts
between the two parties. Indeed, if an Alice edge and a Bob edge assigned to
the same regular palette met at a vertex $v$, then that palette would have to
be permitted to both parties at $v$, contradicting the disjointness of their
permission sets. Thus, once the shared permissions are fixed, cross-party
conflicts within every regular palette are impossible, regardless of how the
individual edges are assigned.

The edge assignments can therefore use private randomness. If an Alice edge
$uv$ has at least one regular palette permitted at both endpoints, Alice
assigns it uniformly at random to one of these palettes; Bob independently
does the same for his own edges. These choices serve only to spread each
party's edges evenly among its permitted palettes, so they do not need to be
revealed to the other party. Consequently, the eventual LLL instance has only
one shared variable $\sigma_v$ per vertex, while all random choices associated
with individual edges remain private.

The only remaining issue is that an edge may have no regular palette permitted
at both endpoints. Each party places such edges in its own fallback graph, which each party
colors using its own separate palette. To keep the fallback graphs small, the
quota at every vertex includes a small additive cushion. This ensures that
even a low-degree vertex is permitted to use sufficiently many regular
palettes, so two endpoints are unlikely to have disjoint permission sets.  

There are therefore only two ways in which the construction can fail locally:
too many edges may be assigned to one regular palette at a vertex, or too many
edges may enter the fallback graph at a vertex. Standard concentration bounds
show that both events are exponentially unlikely in $\Delta$. We use these
local failures as the bad events in our LLL instance. If all of them are
avoided, every regular subgraph and both fallback graphs have sufficiently
small maximum degree to be colored within their allocated palettes using
Vizing's theorem.

The construction is well suited to our two-party LLL because it has only one
shared random variable per vertex, while all edge-level routing choices remain
private. Moreover, the bad events are local and exponentially unlikely. Our
LLL protocol therefore gives the desired communication and round bounds.

\subsection{Additional related work}
\label{sec:related}

Our work connects three active lines of research: the communication complexity of
graph problems, edge coloring with small palettes, and constructive and distributed
versions of the Lov\'asz local lemma. All three have a substantial history,
with considerable recent progress.

\paragraph{Graph problems in two-party communication.}
The study of graph problems in two-party communication goes back at least to
Hajnal, Maass, and Tur\'an~\cite{HajnalMaassTuran88}, who proved tight
deterministic communication bounds for connectivity, $s$--$t$ connectivity,
and bipartiteness. Ivanyos, Klauck, Lee, Santha, and de
Wolf~\cite{IvanyosKlauckLeeSanthaWolf12} subsequently studied classical and
quantum communication for a broader collection of graph properties,
including perfect matching, Eulerian tours, and triangle freeness.

This area has seen substantial renewed activity in recent years. Blikstad,
van den Brand, Efron, Mukhopadhyay, and
Nanongkai~\cite{BlikstadBrandEfronMukhopadhyayNanongkai22} obtained nearly
optimal bounds for maximum bipartite matching and related problems.
Blikstad, Jiang, Mukhopadhyay, and
Yingchareonthawornchai~\cite{BlikstadJiangMukhopadhyayYingchareonthawornchai25}
showed that global vertex connectivity has randomized communication
complexity $\widetilde\Theta(n^{3/2})$, while $s$--$t$ vertex connectivity
admits near-linear communication. Jiang, Nanongkai, and
Sawettamalya~\cite{JiangNanongkaiSawettamalya26} gave a deterministic
$\widetilde O(n^{11/7})$-bit protocol for minimum $s$--$t$ cut, and Cheng,
Jiang, Sawettamalya, and Yu~\cite{ChengJSY26} recently obtained simple
deterministic protocols for general matching, negative-cycle detection, and
shortest paths with negative edge weights. Round complexity has also received
attention: Radhakrishnan, Reddy, and Venkat~\cite{RadhakrishnanReddyVenkat26}
proved an $\Omega(n\log n)$ randomized lower bound for two-round
connectivity, together with superlinear lower bounds for every constant
number of rounds.

Graph coloring has only recently been studied in the two-party edge-partition
model. Flin and
Mittal~\cite{FlinMittal25} gave a zero-error randomized protocol for
$(\Delta+1)$-vertex coloring using $O(n)$ expected communication. They also
proved an $\Omega(n)$ lower bound for constant-error randomized protocols,
even for graphs of maximum degree two, thereby settling the randomized
communication complexity at $\Theta(n)$. Their protocol processes the
vertices sequentially and uses $O(n)$ rounds in expectation.

Chang, Mishra, Nguyen, and Salim~\cite{ChangMNS25} subsequently showed that
optimal communication can also be achieved with much fewer rounds. Their
randomized protocol still uses $O(n)$ expected communication, but reduces the
round complexity to $O(\log\log n\cdot\log\Delta)$ in the worst case. Thus,
$(\Delta+1)$-vertex coloring admits a protocol that is simultaneously
communication-optimal and round-efficient.

\paragraph{Edge coloring with small palettes.}
A central randomized approach to near-optimal edge coloring is the nibble
method, which gradually colors most of the edges while keeping the remaining
degrees under control. Together with applications of the LLL, this method has
led to efficient randomized $(1+\eps)\Delta$-edge-coloring algorithms in the
\textsf{LOCAL} model of distributed computing~\cite{DubhashiGrablePanconesi98,ElkinPettieSu15,ChangHLPU18}.

With a larger round complexity, distributed algorithms can also reach the
exact Vizing bound of $\Delta+1$ colors. Bernshteyn~\cite{Bernshteyn22} gave
the first deterministic distributed algorithm achieving this bound in $\poly(\Delta, \log n)$ rounds, and
subsequent work developed faster algorithms using multi-step Vizing chains
and entropy-compression techniques~\cite{Christiansen23,BernshteynDhawan25}.

There has also been rapid recent progress on centralized algorithms for
Vizing's theorem. A sequence of works substantially improved the running time
for computing a $(\Delta+1)$-edge coloring and culminated in randomized
near-linear-time and deterministic almost-linear-time
algorithms~\cite{Assadi25Vizing,BhattacharyaCarmonCostaSolomonZhang24,
BhattacharyaCostaSolomonZhang25,AssadiEtAl25NearLinear,
AssadiEtAl26Deterministic}. 

\paragraph{Constructive and distributed LLL.}
Moser and Tardos~\cite{MoserTardos10} introduced the resampling algorithm and
witness-tree analysis underlying our protocol. Haeupler, Saha, and
Srinivasan~\cite{HaeuplerSahaSrinivasan10} studied further properties of the
Moser--Tardos algorithm and used them to obtain new algorithmic applications
of the LLL.

A large body of subsequent work has studied the LLL in distributed models.
Chung, Pettie, and Su~\cite{ChungPettieSu17} gave an $O(\log n)$-round
randomized distributed algorithm, while Brandt~et~al.~\cite{BrandtEtAl16}
proved an $\Omega(\log\log n)$ randomized lower bound even for constant
dependency degree. A sequence of later works gave further improved
distributed LLL algorithms under various criteria
\cite{FischerGhaffari17,GhaffariHarrisKuhn18,BrandtMausUitto19,
BrandtGrunauRozhon20,Davies23,DaviesPeck25}.

\subsection{Organization}
\label{sec:organization}

In \Cref{sec:two-party-lll}, we develop and analyze the
communication-efficient constructive LLL protocol, proving
\Cref{thm:two-party-lll}. In \Cref{sec:edge-coloring}, we develop a randomized
edge-coloring procedure, apply our LLL protocol to it, and prove
\Cref{thm:main}. In \Cref{sec:applications}, we give several further
applications of our two-party LLL protocol. We conclude in
\Cref{sec:conclusion} with open problems.

\section{Constructive Lov\'asz local lemma}
\label{sec:two-party-lll}

In this section, we prove \Cref{thm:two-party-lll}. Recall that the partitioned
LLL model and the parameters $(N,M,p,D,h)$ were defined in
\Cref{sec:results-lll}. 
We begin by describing the protocol, which implements Moser--Tardos resampling
using public resampling tables. We then analyze why the protocol is efficient.
First, witness trees show that reaching a late round requires a long chain of
dependent resamplings, so the expected number of resamplings decreases
geometrically across rounds. Second, we encode only the net changes to the
shared resampling counters, converting this geometric decay into the
communication bound in \Cref{thm:two-party-lll}.

\subsection{Moser--Tardos resampling and public tables}
\label{sec:prelim-mt}

The Moser--Tardos algorithm~\cite{MoserTardos10} starts from an independent
sample of all variables. It then repeatedly selects any bad event $B$ that
currently occurs and resamples every variable in $\vbl(B)$ independently from
its original distribution. The choice among the currently occurring events
may follow any fixed or adaptive rule.

We implement this randomness using resampling tables. For every shared
variable $X_i$, where $i\in[N]$, public randomness generates an infinite
sequence
$$
X_i(0),X_i(1),X_i(2),\ldots
$$
of independent samples from the distribution of $X_i$. Both parties maintain
a counter $c_i$, initially zero, and interpret $X_i(c_i)$ as the current value
of $X_i$. Whenever $X_i$ is resampled, its counter increases by one. Each
private variable has an analogous table and counter, visible only to its
owner.

Since the table entries are independent samples consumed in order, this
implementation has exactly the same distribution as ordinary Moser--Tardos
resampling. Its advantage is that the parties can synchronize a shared
variable without transmitting its value. Once both parties know its counter,
they can recover the same value from the public table, regardless of the size
of the variable's domain.

\subsection{The communication protocol}
\label{sec:lll-protocol}

The protocol alternates between Alice and Bob. Round $1$ belongs to Alice,
round $2$ belongs to Bob, and so on. In a round, the active party repeatedly
resamples one of its currently occurring bad events until none of its bad
events occurs. We call such a round \emph{exhaustive}. The party may use any
fixed or adaptive rule to choose which occurring event to resample. 

Since local computation is unrestricted, one might ask why the
active party repeatedly resamples bad events instead of directly
finding an assignment that avoids all its bad events. The reason
is that, by using resampling, the two parties together carry out
a valid execution of the Moser--Tardos algorithm. This allows us
to use its analysis to bound the communication and the number
of alternating rounds.

At the end of a round, the active party communicates how much each shared
counter increased. Formally, for round $j$, let $c_i^{\rm before}$ and
$c_i^{\rm after}$ be the values of counter $c_i$ before and after the round,
and define
$$
u_i^{(j)}:=c_i^{\rm after}-c_i^{\rm before}.
$$
The vector $u^{(j)}=(u_1^{(j)},\ldots,u_N^{(j)})$ is the \emph{net
shared-counter update} of round $j$. The active party sends an encoding of
this vector, using the efficient encoding described in
\Cref{sec:generic-communication-proof1}. The other party then updates its copy
of the shared counters and reconstructs the current shared values from the
public tables before beginning the next round.

Only the net update is communicated. For example, during an Alice round, Bob
waits until Alice has finished before checking any of his bad events. He
therefore needs to know only the final shared values, not the intermediate
values produced during Alice's resamplings. The same reasoning applies during
a Bob round. Moreover, the complete sequence of
resamplings performed by the protocol is a valid Moser--Tardos execution:
every resampled bad event occurs when it is selected, and the Moser--Tardos
algorithm permits an arbitrary selection rule.

For round $j$, let $Z_j$ be the number of resamplings performed during the
round, and let
$$
U_j:=\left\lVert u^{(j)}\right\rVert_1
=\sum_{i=1}^N u_i^{(j)}
$$
be the total number of shared-counter increments. If round $j$ is not invoked
because the protocol has already terminated, we set $Z_j:=U_j:=0$. Since
each resampling involves at most $h$ shared variables, it increments at most
$h$ shared counters. Therefore,
\begin{equation}
\label{eq:generic-UZ}
U_j\le hZ_j.
\end{equation}

It remains to specify when the protocol stops. The protocol always performs
the first two rounds. Starting from round $2$, it stops at the end of the
first round $j$ with $U_j=0$. The following observation shows that the
resulting assignment avoids every bad event.

\begin{observation}[Zero-update termination]
\label{lem:generic-termination}
If $j\ge2$ and $U_j=0$, then no bad event occurs at the end of round $j$.
\end{observation}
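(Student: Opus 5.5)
The plan is to split the bad events by owner and handle Alice's and Bob's events separately. Without loss of generality, let round $j$ be Bob's round. The other case is symmetric with the roles of Alice and Bob exchanged. Since $j\ge 2$, round $j-1$ exists and belongs to Alice.

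\textbf{Bob's events.} By definition, Bob's round is exhaustive: it ends only when none of the events in $\mathcal B_B$ occurs. These events depend only on $\mathcal X\cup\mathcal V_B$, whose current values Bob knows exactly. So at the end of round $j$, no event of $\mathcal B_B$ occurs. This part uses nothing about $U_j$.

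\textbf{Alice's events.} Round $j-1$ was an exhaustive Alice round, so at its end no event of $\mathcal B_A$ occurs. Each $B\in\mathcal B_A$ depends only on variables in $\mathcal X\cup\mathcal V_A$. During Bob's round $j$, Bob resamples only variables in events of $\mathcal B_B$, which lie in $\mathcal X\cup\mathcal V_B$. Therefore no counter of a variable in $\mathcal V_A$ changes.

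The assumption $U_j=0$ gives $u_i^{(j)}=0$ for all $i$, because the $u_i^{(j)}$ are nonnegative counter differences with zero sum. Counters only increase, so the final counters of the shared variables equal their values at the end of round $j-1$. Every shared variable $X_i$ is defined as $X_i(c_i)$ from the fixed public table, so it therefore has exactly the same value as at the end of round $j-1$. It follows that every variable in $\mathcal X\cup\mathcal V_A$ is unchanged from the end of round $j-1$, and so no event of $\mathcal B_A$ occurs at the end of round $j$.

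The only subtle point is that shared variables may be changed and then restored during a round, so one cannot argue purely from the final assignment. The counter representation handles this. Zero net counter change forces every intermediate increment count to be zero, because counters are monotone. Identical counters then determine identical values through the public tables.

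Finally, we must check that the round-$(j-1)$ guarantee is available. If the protocol had stopped earlier, round $j$ would not be invoked, so round $j-1$ was genuinely executed. The case $j=2$ is covered because round $1$ is always performed by Alice.
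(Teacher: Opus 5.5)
Your proof is correct and follows the paper's argument: exhaustiveness rules out the active party's events, and the other party's events are ruled out because its private variables are untouched while $U_j=0$ leaves every shared counter, hence every shared value, exactly as at the end of round $j-1$. The paper takes round $j$ to be Alice's rather than Bob's, which is immaterial by symmetry, and your remark about intermediate changes is harmless but unnecessary, since only the final values matter.
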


\begin{proof}
Suppose that round $j$ is an Alice round; the other case is symmetric. Since
the round is exhaustive, none of Alice's bad events occurs at its end. Bob
finished round $j-1$ with none of his bad events occurring. During Alice's
round, Bob's private variables do not change, and $U_j=0$ means that no shared
variable is resampled. Thus, all variables on which Bob's bad events depend
retain their values from the end of round $j-1$, so none of Bob's bad events
occurs at the end of round $j$.
\end{proof}

Thus, whenever the protocol stops, its output avoids all bad events. We next
bound the expected number of resamplings across rounds and use this estimate
to derive the claimed round and communication bounds.

\subsection{Witness trees}
\label{sec:prelim-witness}

Resampling a bad event can make another bad event occur, which
may then require further resampling. To analyze this process,
fix a particular step at which a bad event $B$ is resampled.
We look backwards through the execution and collect earlier
resamplings that may have led to $B$ occurring at this step.
Moser and Tardos~\cite{MoserTardos10} introduced \emph{witness trees}
to organize such resampling histories: the root is labeled by
$B$, and the other nodes represent earlier resamplings.

We recall their witness-tree construction and probability bound
below. We will then show that a resampling in a late round of
our protocol must have a deep witness tree, allowing us to bound
the number of communication rounds.

We begin by defining the class of labeled trees used in the
construction. For a bad event $B$, let
$\Gamma^+(B):=\Gamma(B)\cup\{B\}$ denote its inclusive neighborhood.

\begin{definition}[Proper trees]
A rooted tree whose nodes are labeled by bad events is
\emph{proper} if, for every node labeled $B$, each of its children
has a label in $\Gamma^+(B)$, and no two of its children have
the same label. For a node $u$ in a proper tree, write $B(u)$
for the bad event labeling $u$.
\end{definition}

We now associate with each resampling a proper tree, called its
\emph{witness tree}. The \emph{resampling log} is the sequence of
bad events in the order in which they are resampled. Consider
any finite prefix $B_1,B_2,\ldots,B_t$ of this log. We construct
the witness tree associated with the resampling at time $t$
as follows.

Start with a root labeled $B_t$, and then scan
$B_{t-1},B_{t-2},\ldots,B_1$ in reverse order. When the scan reaches $B_i$,
ignore it if its label does not belong to the inclusive neighborhood of any
label currently in the tree. Otherwise, add a new node labeled $B_i$ and
attach it as a child of a deepest node whose label belongs to
$\Gamma^+(B_i)$. If there are several such nodes at the same depth, choose one arbitrarily. We say that the resulting tree \emph{occurs} as a witness tree in the
resampling log.

The resulting witness tree is proper. The condition on parent and child
labels follows directly from the attachment rule. To see that siblings have
distinct labels, suppose that a node labeled $B$ has already been added when
an earlier occurrence of $B$ is encountered. The existing node is eligible
as a parent, so the new node is attached below a node whose depth is at least
that of the existing node. The two nodes labeled $B$ therefore cannot be
siblings.

The depth of a node is the number of edges on its path from the root, and the
depth of a tree is the maximum depth of any of its nodes. Thus, the root has
depth zero. Attaching each new node below a deepest eligible node ensures that
a long chain of dependent resamplings cannot be represented by a shallow
tree. This property allows us to connect witness-tree depth to the number of
alternating rounds.

\begin{proposition}[Moser--Tardos witness-tree bound~\cite{MoserTardos10}]
\label{prop:mt-witness}
For any fixed proper tree $T$,
$$
\Prb[T\text{ occurs as a witness tree in the resampling log}]
\le\prod_{u\in V(T)}\Prb[B(u)].
$$
\end{proposition}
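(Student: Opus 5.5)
We prove \Cref{prop:mt-witness} by the standard Moser--Tardos coupling argument, using the resampling tables of \Cref{sec:prelim-mt}. The one feature to keep in mind is that the selection rule is arbitrary, and in our protocol even adaptive. The tables are fixed in advance: every variable $X$ (shared or private) has an infinite sequence $X(0),X(1),\ldots$ of independent samples. The execution of the protocol is a deterministic function of these tables, with any internal randomness of the selection rule also drawn independently of them.

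For a fixed proper tree $T$, we run the following \emph{$T$-check}. List the nodes of $T$ in order of decreasing depth, breaking ties arbitrarily, and process them in this order. For each node $u$, and for each variable $X\in\vbl(B(u))$, let $k_u(X)$ be the number of nodes processed strictly before $u$ whose label depends on $X$. The check passes at $u$ if $B(u)$ holds when every $X\in\vbl(B(u))$ is set to $X(k_u(X))$. The $T$-check passes if it passes at every node.

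The key step is to show that if $T$ occurs as a witness tree, then the $T$-check passes. Take the time $t$ at which $T$ is generated. A node $u$ corresponds to a resampling of $B(u)$ at some time $s<t$. We need the following claim: for each $X\in\vbl(B(u))$, the number of earlier resamplings (before time $s$) that involve $X$ equals $k_u(X)$. From the claim, $X$ had value $X(k_u(X))$ at time $s$. Since $B(u)$ was occurring when it was selected, the check passes at $u$.

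To prove the claim, we use the construction of the witness tree. Every earlier resampling $B_r$ with $r<s$ that involves $X$ shares the variable $X$ with $B(u)$, so $B(u)\in\Gamma^+(B_r)$. When the reverse scan reaches $B_r$, the node $u$ is already in the tree and is an eligible parent. Hence $B_r$ is added at depth strictly greater than the depth of $u$. Conversely, nodes of depth greater than $u$ whose labels involve $X$ must correspond to earlier times. The reason is that any two nodes whose labels share a variable lie at distinct depths, and the later resampling is the shallower one. A deeper node at distinct depth thus comes earlier. Therefore the earlier $X$-resamplings correspond exactly to the nodes involving $X$ that are processed before $u$, which gives the count $k_u(X)$.

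The main obstacle is exactly this correspondence, which is the heart of the Moser--Tardos argument. We must handle ties carefully: nodes at the same depth have labels with pairwise disjoint variable sets, so they never affect each other's counts. We must also handle the fact that the scan skips irrelevant events. Skipped events do not involve $X$, since otherwise $u$ would have been eligible and they would have been attached.

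Finally, we bound the probability that the $T$-check passes. For distinct nodes $u\neq u'$ that depend on a common $X$, we have $k_u(X)\ne k_{u'}(X)$. So the table entries inspected for different nodes are disjoint collections of independent samples. Each entry $X(k)$ has the original distribution of $X$. Hence the check at $u$ passes with probability exactly $\Prb[B(u)]$, and these events are independent across nodes. Therefore
\[
\Prb[T\text{ occurs as a witness tree in the resampling log}]\le\Prb[T\text{-check passes}]=\prod_{u\in V(T)}\Prb[B(u)].
\]

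Nothing in this argument depends on the adaptive, two-party selection rule, since the $T$-check is defined from the tables alone.
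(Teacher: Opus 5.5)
Your proof is correct and is the standard Moser--Tardos $T$-check coupling argument from~\cite{MoserTardos10}; the paper does not reprove the proposition but cites that argument. The only slip is that the root corresponds to time $s=t$ rather than $s<t$, and your counting argument covers that case verbatim.
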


Distinct resampling times produce distinct witness trees. Trees with different
root labels are clearly distinct. If two resamplings have the same root label
$B$, the witness tree of the later resampling contains the earlier resampling
of $B$ and therefore has strictly more nodes labeled $B$. Consequently,
summing the bound in \Cref{prop:mt-witness} over any family of proper witness
trees bounds the expected number of resamplings whose witness trees belong to
that family. This argument applies to every adaptive event-selection rule,
including the rules used during the exhaustive rounds of our protocol.
\subsection{Witness-tree tail and round complexity}
\label{sec:lll-depth}

We now prove two estimates used to bound the number of communication rounds.
Recall that $\lambda:=ep(D+1)\le1-\delta$. The first follows by combining the
Moser--Tardos witness-tree bound~\cite{MoserTardos10} with a standard counting
argument. We include the proof for completeness. It shows that the expected
number of resamplings whose witness tree has large depth decreases
geometrically with the depth. The second estimate connects witness-tree depth
to our alternating protocol by showing that every resampling in a late round
must have a deep witness tree.

\begin{lemma}[Deep witness trees~\cite{MoserTardos10}]
\label{lem:generic-deep-witness}
Fix any event-selection rule and any integer $k\ge0$.  Over the randomness of
the resampling tables, the expected number of resamplings in the entire
execution whose witness tree has depth at least $k$ is at most
$$
\frac{Mp}{1-\lambda}\lambda^k.
$$
\end{lemma}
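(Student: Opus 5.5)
We will bound the expectation by a sum over proper trees, using the injectivity remark that follows \Cref{prop:mt-witness}: distinct resampling times produce distinct witness trees. Hence the expected number of resamplings whose witness tree has depth at least $k$ is at most
$$
\sum_{T}\Prb[T\text{ occurs}]\le\sum_{T}\prod_{u\in V(T)}\Prb[B(u)]\le\sum_{T}p^{|V(T)|},
$$
where $T$ ranges over proper trees of depth at least $k$. Every such tree contains a root-to-node path of length at least $k$. It therefore suffices to show that, for each root label $B$ (there are at most $M$ of them), the weighted count of proper trees rooted at $B$ with depth at least $k$ is at most $\frac{p}{1-\lambda}\lambda^k$.

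The counting uses a Galton--Watson comparison in the style of Moser--Tardos. Each node labeled $B'$ has at most $D+1$ candidate child labels in $\Gamma^+(B')$, and each label is used at most once among its children. A proper tree is thus determined by choosing, at every node, a subset of the at most $D+1$ candidates. Define the weight of a tree as $\prod_u p$ and let $W_d$ be the total weight of proper trees rooted at a fixed label with depth exactly $\ge d$ (or, more conveniently, all trees with a marked path of length $d$). Summing over all trees, the generating function satisfies $S\le p(1+S)^{D+1}$. With the choice $x=1/(D+1)$, the standard estimate $p\le x(1-x)^{D}$ up to the factor $e$ gives $S\le \frac{1}{D}$, or more simply $S+1\le (1-x)^{-1}$ up to constants.

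For the depth constraint, we decompose along a distinguished path root $=v_0,v_1,\dots,v_k$. At each $v_i$ with $i<k$, we choose which of the at most $D+1$ children continues the path, contributing a factor $D+1$. The remaining subtrees hanging off the path contribute at most $(1+S)^{D+1}$ per path node. The bound becomes
$$
p\bigl((D+1)p(1+S)^{D+1}\bigr)^k(1+S)^{D+1}.
$$
Using $(1+S)^{D+1}\le(1+1/D)^{D+1}\lesssim e$ turns each bracket into at most $ep(D+1)=\lambda$. Summing over the exact lengths $k,k+1,\dots$ of the longest path produces the geometric factor $\frac{1}{1-\lambda}$. Multiplying by $M$ root labels gives $\frac{Mp}{1-\lambda}\lambda^k$.

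The main obstacle is getting the constants exactly right. The crude estimate $(1+S)^{D+1}\le e$ does not give precisely $\lambda$ without slack. Marking the path also overcounts trees with several long paths, but overcounting is harmless for an upper bound. As a safer first attempt, we would instead bound trees directly by depth layers: a tree of depth at least $k$ has at least $k+1$ nodes, so we would count proper trees with $n$ nodes. The number of such rooted trees with a fixed root label is at most $\binom{(D+1)n}{n-1}/n\le (e(D+1))^{n-1}$. This gives weight $\le p\,\lambda^{n-1}$, and summing over $n\ge k+1$ yields exactly $\frac{p\lambda^k}{1-\lambda}$ per root label. We will use this node-count argument as the actual proof, since it gives the stated constant cleanly.
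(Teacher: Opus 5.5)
Your final node-count argument is exactly the paper's proof. It bounds the expectation by $\sum_T\prod_{u}\Prb[B(u)]$ using injectivity of resampling times. It counts $t$-node proper trees with a fixed root label by the Fuss--Catalan number $\frac1t\binom{(D+1)t}{t-1}\le(e(D+1))^{t-1}$, justified by the embedding into the $(D+1)$-ary tree that your ``subset of candidates at each node'' remark captures. It then sums $Mp\lambda^{t-1}$ over $t\ge k+1$. The Galton--Watson detour is unnecessary, and as you note its constants do not close, so you can simply drop it.
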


\begin{proof}
Put $d:=D+1$ and first fix the root label.  The inclusive neighborhood of any
event contains at most $d$ labels.  Fix an ordering of each inclusive
neighborhood $\Gamma^+(B)$, and associate each event in $\Gamma^+(B)$ with a
distinct child slot in $[d]$.  Consider a node of a witness tree labeled $B$.
Its children have distinct labels in $\Gamma^+(B)$, so we may place each child
in the slot associated with its label.  Applying this rule recursively embeds
the witness tree into the infinite rooted $d$-ary tree, in which every node
has $d$ distinguished child slots.

This embedding is injective once the root label is fixed.  Indeed, the slot of
a child and the label of its parent determine the child's label. 
The number of $t$-vertex rooted subtrees of the infinite $d$-ary tree is the
Fuss--Catalan number~\cite[Equation~(7.66)]{GrahamKnuthPatashnik94}
$$
T_d(t)=\frac{1}{(d-1)t+1}\binom{dt}{t}
=\frac{1}{t}\binom{dt}{t-1}.
$$
For $t\ge2$, we have
\begin{align*}
T_d(t)
&\le\frac{1}{t}\left(\frac{edt}{t-1}\right)^{t-1}
&&\text{since }\binom{n}{r}\le\left(\frac{en}{r}\right)^r \\
&\le(ed)^{t-1}.
\end{align*}
Indeed, $(t/(t-1))^{t-1}\le e\le t$ for $t\ge3$, while $t=2$ follows
directly; the case $t=1$ is immediate.  Hence, for any fixed root label, there
are at most $[e(D+1)]^{t-1}$ witness trees with $t$ nodes.

A tree of depth at least $k$ has at least $k+1$ nodes.  There are at most $M$ choices
for its root, and \Cref{prop:mt-witness} bounds the probability of each
$t$-node tree by $p^t$.  Therefore,
\begin{align*}
\EE[\text{number of resamplings with depth at least }k]
&\le M\sum_{t\ge k+1}[e(D+1)]^{t-1}p^t \\
&=Mp\sum_{t\ge k+1}\lambda^{t-1}
&&\text{since }\lambda=ep(D+1) \\
&=\frac{Mp}{1-\lambda}\lambda^k.
\end{align*}
Distinct resampling times have distinct witness trees, so every resampling is
associated with a different tree in the sum.  Hence the expected number
of resamplings is at most the sum of the occurrence probabilities of
the relevant witness trees, which is the quantity bounded above.
\end{proof}

The next lemma is the two-party analogue of the standard depth-versus-round
property for parallel Moser--Tardos resampling
\cite[Lemma~4.1]{MoserTardos10}.  The only additional issue is that the two
parties have private variables that the other party cannot change.

\begin{lemma}[A late round forces depth]
\label{lem:generic-round-depth}
Every resampling in round $j\ge3$ has witness-tree depth at least $j-2$.
\end{lemma}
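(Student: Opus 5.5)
We argue by induction on $j\ge 3$. The key claim is stronger and cleaner to induct on. Every resampling of an event $B$ in round $j\ge 2$ has, in its witness tree, a child of the root, or more generally a descendant chain, coming from a resampling in round $j-1$ that shares a \emph{shared} variable with $B$. More precisely, we aim to prove the following. If $B$ is resampled at time $t$ in round $j\ge3$, then some earlier resampling at time $s<t$ in round $j-1$ has an event $B'\in\Gamma^+(B)$, and the witness tree of time $s$ has depth at least $j-3$. We also use the following property of the attachment rule: if $B_s$ with $s<t$ is attached when building the tree for time $t$, then the subtree rooted at that node contains a copy of the witness tree at time $s$. Hence its depth inside the tree for $t$ is at least $1+\operatorname{depth}(T_s)$. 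Concretely, every event added to $T_s$ lies in $\Gamma^+$ of some label already present in $T_t$ at the corresponding scan moment, and the deepest-node rule only pushes nodes deeper. This monotonicity is the standard fact behind \cite[Lemma~4.1]{MoserTardos10}, and we would reprove it briefly.

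Base case $j=3$: the depth bound $1$ just asks that the root have a child. Say round $3$ is Alice's, and $B$ is the first event Alice resamples in round 3 that has no earlier dependent resampling; if no such $B$ exists we are done. Consider the first resampling of $B$ in round 3. If some earlier round-3 resampling lies in $\Gamma^+(B)$, the root gets a child. Otherwise, none of $\vbl(B)$ changed during round 3 before time $t$, so $B$ held at the start of round 3. But $B$ did not hold at the end of round 1, because that round was exhaustive for Alice. Bob cannot touch Alice's private variables, so some round-2 resampling changed a shared variable in $\vbl(B)$. That resampling's event lies in $\Gamma(B)$, so it is attached, giving depth $\ge1$.

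Inductive step: let $B$ be resampled at time $t$ in round $j$. Take the latest earlier time $s<t$ whose event lies in $\Gamma^+(B)$ and that falls in round $j-1$ or later. We need such an $s$ in round $\ge j-1$, and the exhaustiveness argument above provides one. Either $\vbl(B)$ was untouched during round $j$ before $t$, in which case $B$ held at the start of round $j$ and some round-$(j-1)$ resampling touched a shared variable of $B$. Or some earlier round-$j$ resampling $B_s\in\Gamma^+(B)$ exists. In the second case we recurse on $s$ within round $j$. This gives a chain inside round $j$ that must eventually terminate at an event whose variables were untouched within the round before its time, and hence at a round-$(j-1)$ resampling. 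By induction, that resampling has depth $\ge j-3$. By monotonicity, the resulting chain in $T_t$ has depth at least $(j-3)+1=j-2$.

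The main obstacle is making the monotonicity and embedding step rigorous. We must show that depth contributions compose along the backward chain despite the deepest-node attachment and possibly intervening events. We would formalize it by proving, by reverse induction on the scan, that for each $s<t$ whose event gets added to $T_t$, the node for $B_s$ has height (depth of its subtree) at least $\operatorname{depth}(T_s)$. With that in hand, stating the depth bound as the inductive hypothesis $\operatorname{depth}(T_s)\ge j'-2$ for all resamplings in round $j'$ closes the argument.
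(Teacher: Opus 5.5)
Your overall architecture is sound: induct on the round, and inside round $j$ follow a backward chain of dependent resamplings until you reach one in round $j-1$. The gap is that the tree-comparison lemma you rest it on is false as you formulate it. It is not true that the subtree of the node for $B_s$ in $T_t$ contains a copy of $T_s$. Nor is it true that this node has height at least $\operatorname{depth}(T_s)$. The reason is that, while $T_t$ is being built, an event that is a child of $B_s$ in $T_s$ is attached below a \emph{deepest} eligible node of $T_t$, and that node may lie in another branch.

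For example, let the dependency graph contain the $5$-cycle $R\sim X\sim W\sim Y\sim S\sim R$, with no other adjacencies among these events. Let the log end with $Y,S,W,X,R$, where $t$ is the time of $R$ and $s$ is the time of $S$. In $T_t$, both $X$ and $S$ are children of the root, $W$ lies below $X$ at depth $2$, and $Y$ is attached below $W$ at depth $3$. So the $S$-node is a leaf, whereas $T_s$ (root $S$, child $Y$) has depth $1$. Hence the reverse induction proposed in your last paragraph cannot be carried out with the subtree-height invariant.

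What is true, and is all your induction uses, is a statement about absolute depths. If time $s$ appears in $T_t$ at depth $d_t(s)$, then every time appearing at depth $k$ in $T_s$ also appears in $T_t$, at depth at least $d_t(s)+k$. This follows by reverse induction on the scan: the parent of such a time in $T_s$ is already present in $T_t$, at depth at least $d_t(s)+k-1$, and is eligible. Your sentence beginning ``Concretely'' is essentially this argument; it is the subtree formulation that fails. The corrected statement gives $\operatorname{depth}(T_t)\ge d_t(s)+\operatorname{depth}(T_s)\ge 1+\operatorname{depth}(T_s)$, which closes your induction.

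Your ``more precise'' claim in the first paragraph is also false in general. It asserts that some round-$(j-1)$ resampling has its event in $\Gamma^+(B)$. But $B$ may have been made to occur during round $j$ by an earlier round-$j$ resampling of a neighbor $B''$, with no round-$(j-1)$ resampling touching $\vbl(B)$; only $B''$ need be linked to round $j-1$. Your inductive step correctly goes through the within-round chain, so the claim should be stated in that form.

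Finally, the paper avoids both the induction on $j$ and any comparison between $T_t$ and $T_s$. It concatenates the within-round chains and the cross-round links into one time-decreasing chain $r=r_0,r_1,\ldots$. This chain contains a resampling from each of the rounds $j,j-1,\ldots,2$, with consecutive events dependent. The paper then uses only a one-step fact: when $r_{i+1}$ is scanned, $r_i$ is already in the tree and eligible, so $r_{i+1}$ is placed at least one level deeper. This yields depth at least $j-2$ directly. Your route, once repaired, reaches the same bound through a stronger but still elementary monotonicity statement.
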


\begin{proof}
Consider a resampling $r$ in an Alice round $j$; the Bob case is symmetric.

\paragraph{Within round $j$.}
Trace the causal history of $r$ backwards within the round.  If its event did
not occur at the beginning of the round, some earlier dependent resampling in
the same round must have caused it to occur.  Repeating this argument, with
the resampling time decreasing at every step, reaches an Alice event $B$ that
already occurred when round $j$ began.

\paragraph{Crossing round $j-1$.}
The event $B$ did not occur when Alice's preceding round $j-2$ ended, because
Alice processed her events exhaustively.  During the intervening Bob round,
no Alice-private variable changed.  Therefore, some Bob resampling in round
$j-1$ must have changed a shared variable on which $B$ depends.  That
resampling is dependent on $B$.

\paragraph{From the causal chain to tree depth.}
Applying the same argument recursively produces a time-decreasing chain that
contains a resampling from every round $j,j-1,\ldots,2$, with consecutive
events dependent.  When the witness tree of $r$ is constructed backwards, the
later event of each consecutive pair is already present when the earlier one
is scanned.  The earlier event is attached below a deepest eligible node, so
its depth is at least one larger.  The chain has at least $j-2$ links, and the
witness tree therefore has depth at least $j-2$.
\end{proof}

Recall that $A:=Mp/(1-\lambda)$. For $j\ge3$, every resampling in round $j$
has witness-tree depth at least $j-2$ by
\Cref{lem:generic-round-depth}, so \Cref{lem:generic-deep-witness} bounds the
expected number of such resamplings. For each of the first two rounds, we
apply the same witness-tree bound with $k=0$, which bounds the expected number
of resamplings in the entire execution by $A$. Therefore,
\begin{equation}
\label{eq:generic-round-decay}
\EE[Z_j]\le
\begin{cases}
A,&j=1,2,\\
A\lambda^{j-2},&j\ge3.
\end{cases}
\end{equation}

\begin{lemma}[Round complexity]
\label{lem:generic-rounds}
The expected number of rounds of the protocol is
$$
O\left(1+\frac{\log(1+A)}{\log(1/\lambda)}\right).
$$
\end{lemma}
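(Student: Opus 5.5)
We bound the number of rounds $R$ through its tail, using the termination rule together with the geometric decay in \eqref{eq:generic-round-decay}. The key observation is that the protocol stops at the end of the first round $j\ge2$ with $U_j=0$. So if round $j+1$ is executed for some $j\ge2$, then $U_j\ge1$, and hence $Z_j\ge1$ by \eqref{eq:generic-UZ} (a round with no resamplings has $U_j=0$). Therefore, for every $j\ge2$,
$$
\Prb[R\ge j+1]\le\Prb[Z_j\ge1]\le\EE[Z_j],
$$
by Markov's inequality applied to the nonnegative integer variable $Z_j$. Note that $Z_j$ is defined even when round $j$ is not invoked (then it is $0$), so this is consistent.

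Next, write $\EE[R]=\sum_{j\ge1}\Prb[R\ge j]$. We split at a threshold $j_0:=3+\lceil \log(1+A)/\log(1/\lambda)\rceil$. The first $j_0$ terms are each at most $1$, which contributes $j_0=O(1+\log(1+A)/\log(1/\lambda))$. For $j> j_0$, we use $\Prb[R\ge j]\le \EE[Z_{j-1}]\le A\lambda^{j-3}$, valid since $j-1\ge3$. Summing the geometric series gives
$$
\sum_{j>j_0}A\lambda^{j-3}=\frac{A\lambda^{j_0-2}}{1-\lambda}.
$$
By the choice of $j_0$ we have $\lambda^{j_0-3}\le 1/(1+A)$, so this tail is at most $\frac{A}{1+A}\cdot\frac{\lambda}{1-\lambda}\le \frac{1}{\delta}=O(1)$, using $\lambda\le1-\delta$ with $\delta$ a fixed constant.

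The only delicate point is the edge case $\lambda=0$ or $A=0$, where the expression $\log(1/\lambda)$ degenerates. When $p>0$ we have $\lambda>0$, and the assumption $Mph>0$ gives $p>0$, so this case does not arise. We interpret the bound with the convention that it is $O(1)$ in degenerate cases, and otherwise the computation above goes through directly. I do not expect a real obstacle: the main content, the decay \eqref{eq:generic-round-decay}, has already been established, and the remaining step is the link $R\ge j+1\Rightarrow Z_j\ge1$ coming from the zero-update stopping rule.
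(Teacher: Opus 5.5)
Your proof is correct and follows essentially the same route as the paper. The zero-update stopping rule gives $\Prb[\text{round } j+1 \text{ is invoked}]\le\Prb[Z_j\ge1]\le\EE[Z_j]$ by Markov's inequality, and the geometric decay is then summed past the point where $A\lambda^k$ drops below $1$. Your explicit threshold $j_0$ is simply a concrete version of the paper's count of how many summands $\min\{1,A\lambda^k\}$ equal $1$.
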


\begin{proof}
If round $j\ge3$ is invoked, then round $j-1$ did not terminate and hence
$Z_{j-1}\ge1$. By \Cref{eq:generic-round-decay} and Markov's inequality, the
probability that round $j$ is invoked is at most
$\min\{1,A\lambda^{j-3}\}$. The expected number of rounds is therefore at
most
\begin{align*}
2+\sum_{k\ge0}\min\{1,A\lambda^k\}&=O\left(1+\frac{\log(1+A)}{\log(1/\lambda)}\right).
\end{align*}
If $A\ge1$, the summands equal $1$ for at most
$1+\lfloor\log A/\log(1/\lambda)\rfloor$ values of $k$.
If $A<1$, no summand equals $1$.
In either case, the remaining summands form a geometric
series with first term at most $1$, whose sum is at most
$1/(1-\lambda)\le1/\delta$.
This proves the claimed bound.
\end{proof}

\subsection{Encoding the net update vectors}
\label{sec:generic-communication-proof1}

The only information that the two parties need to communicate in round $j$ is the net shared-counter update vector
$u^{(j)}$. We now show how to encode this vector using a number of bits
bounded in terms of the total number $U_j$ of shared-counter increments in
the round. We first bound the expected length of one message in terms of
$\EE[U_j]$ and then sum these bounds using the geometric decay from
\Cref{eq:generic-round-decay}.

There are two simple ways to communicate the update, but neither is efficient
for all values of $U_j$. Sending the entire vector
$u^{(j)}\in\mathbb Z_{\ge0}^N$ coordinate-by-coordinate requires
$\Omega(N)$ bits, which is wasteful when only a few counters change. At the
other extreme, the sender could list the index of a counter once for every
increment. This uses $O(U_j\log N)$ bits and works well when $U_j$ is small,
but it becomes too expensive when $U_j$ is large. For example, when
$U_j=\Theta(N)$, the cost is $\Theta(N\log N)$, exceeding the linear
communication bound that we seek. We therefore use an encoding that remains
efficient throughout the full range of $U_j$.

The sender first transmits one control bit indicating whether $U_j=0$. If
$U_j=0$, no further data are sent. Otherwise, let $u:=U_j>0$. The sender
transmits $u$ using $O(\log(1+u))$ bits. The update vector is then a
nonnegative integer vector satisfying
$$
u_1^{(j)}+\cdots+u_N^{(j)}=u.
$$
By the stars-and-bars correspondence, such a vector can be represented by a
string containing $u$ stars and $N-1$ bars, where the numbers of stars before,
between, and after the bars give its coordinates. Hence there are
$$
\binom{N+u-1}{u}
$$
possible update vectors with coordinate sum $u$.

For each positive integer $u$, the parties fix a public ordering of these
vectors. Once the receiver knows $u$, the sender transmits the rank of
$u^{(j)}$ in this ordering using
$\left\lceil\log\binom{N+u-1}{u}\right\rceil$ bits. The receiver can then
recover every coordinate of $u^{(j)}$. Thus, a single rank specifies both
which counters changed and how much each counter increased.

\begin{lemma}[Cost of one round]
\label{lem:generic-one-round-code}
Let $\mu_j:=\EE[U_j]$. If $\mu_j=0$, no update data are sent. Otherwise,
excluding the one control bit sent in each invoked round, the expected length
of the message in round $j$ is
$$
O\left(
\mu_j\log\left(e\left(1+\frac{N}{\mu_j}\right)\right)
\right).
$$
\end{lemma}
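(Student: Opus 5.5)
The plan is to bound the message length pointwise by a concave function of $U_j$ and then apply Jensen's inequality. When $U_j=0$, nothing beyond the control bit is sent. In particular, if $\mu_j=0$ then $U_j=0$ almost surely, so no update data are ever sent. When $u:=U_j>0$, the message consists of the encoding of $u$, which takes $O(\log(1+u))$ bits, and the rank, which takes $\lceil\log\binom{N+u-1}{u}\rceil$ bits. Using $\binom{n}{r}\le(en/r)^r$, we get
$$
\log\binom{N+u-1}{u}\le u\log\left(\frac{e(N+u-1)}{u}\right)\le u\log\left(e\left(1+\frac{N}{u}\right)\right).
$$
The ceiling adds at most $1\le u$ bits. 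Also $\log(1+u)\le u$, and $\log(e(1+N/u))\ge 1$. Hence the total length is at most $C\,f(u)$ for some absolute constant $C$, where
$$
f(x):=x\log\left(e\left(1+\frac{N}{x}\right)\right)=x+x\log\frac{x+N}{x}\quad(x>0),\qquad f(0):=0 .
$$
With this convention, the pointwise bound (length excluding the control bit) $\le C f(U_j)$ holds for every value of $U_j$, including $U_j=0$ and rounds that are never invoked.

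The main step is to show that $f$ is concave on $[0,\infty)$. Write $f(x)=x+x\log(x+N)-x\log x$ and differentiate:
$$
f'(x)=1+\log(x+N)+\frac{x}{x+N}-\log x-1,
$$
$$
f''(x)=\frac{1}{x+N}+\frac{N}{(x+N)^2}-\frac1x=\frac{x(x+N)+Nx-(x+N)^2}{x(x+N)^2}=\frac{-N^2}{x(x+N)^2}\le0 .
$$
Continuity at $0$ holds because $x\log((x+N)/x)\to0$ as $x\to0^+$. So $f$ is concave on $[0,\infty)$ with $f(0)=0$.

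By Jensen's inequality, $\EE[f(U_j)]\le f(\EE[U_j])=f(\mu_j)$. This gives an expected message length of $O\bigl(\mu_j\log(e(1+N/\mu_j))\bigr)$, as claimed. Here $\mu_j$ is finite by \Cref{eq:generic-UZ} and \Cref{eq:generic-round-decay}.

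The only delicate point is the concavity computation together with checking that the additive terms ($\log(1+u)$ and the ceiling) are absorbed into $f$ up to a constant. Absorbing them into the single concave function before taking expectations is necessary: the $O(\log(1+u))$ term would not otherwise combine cleanly under Jensen's inequality with the rank cost.
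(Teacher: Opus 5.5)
Your proposal is correct and follows the paper's route. You bound the message length pointwise by a constant multiple of $f(u)=u\log\left(e\left(1+N/u\right)\right)$, absorbing the $\log(1+u)$ term and the rounding, then establish concavity via $f''(u)<0$ and apply Jensen's inequality. Computing $f''$ with natural logarithms (you get $-N^2/(u(u+N)^2)$ where the paper has $-N^2/((\ln 2)u(u+N)^2)$) changes it only by a positive constant factor, so concavity is unaffected.
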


\begin{proof}
Condition on $U_j=u>0$. The encoding uses $O(\log(1+u))$ bits for $u$ and at most
$\left\lceil\log\binom{N+u-1}{u}\right\rceil$ bits for the rank. The latter term
satisfies
\begin{align*}
\log\binom{N+u-1}{u}
&\le u\log\left(\frac{e(N+u-1)}{u}\right)
&&\text{since $\binom{n}{r}\le(en/r)^r$}\\
&\le u\log\left(e\left(1+\frac{N}{u}\right)\right).
\end{align*}
Moreover, $\log(1+u)\le u$ for every positive integer $u$, so the cost of
encoding $u$ is absorbed into the same bound. Thus, the message length is
$$
O\left(
u\log\left(e\left(1+\frac{N}{u}\right)\right)
\right).
$$

For $u>0$, define
$f(u):=u\log\left(e\left(1+N/u\right)\right)$, and set $f(0):=0$, which
agrees with its limit as $u$ approaches zero. A direct calculation gives
$
f''(u)=-\frac{N^2}{(\ln 2)\,u(u+N)^2}<0
$
for $u>0$, so $f$ is concave on $[0,\infty)$. Jensen's inequality now gives
$
\EE[f(U_j)]\le f(\EE[U_j])
=\mu_j\log\left(e\left(1+\frac{N}{\mu_j}\right)\right)
$, 
which proves the claimed bound.
\end{proof}

\subsection{Communication complexity}
\label{sec:generic-communication-proof}

We have bounded the expected number of update-data bits in round $j$ in terms
of $\mu_j=\EE[U_j]$. We now sum this bound over all rounds and then account
for the control bit sent in each invoked round.

\begin{lemma}[Geometric update coding]
\label{lem:geometric-coding}
For $N\ge1$, $x > 0$, and $0<\lambda\le1-\delta$,
$$
\sum_{k\ge0}
x\lambda^k\log\left(e\left(1+\frac{N}{x\lambda^k}\right)\right)
=O\left(x\log\left(2+\frac{N}{x}\right)\right).
$$
\end{lemma}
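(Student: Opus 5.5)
The idea is to expand the logarithm so that the sum becomes a geometric series plus a term linear in $k$. The linear term comes from the factor $\lambda^{-k}$ inside the logarithm. For each $k$, write
$$
\log\left(e\left(1+\frac{N}{x\lambda^k}\right)\right)\le \log\left(e\left(1+\frac{N}{x}\right)\right)+k\log(1/\lambda).
$$
This holds because $1+\frac{N}{x\lambda^k}\le \lambda^{-k}\left(1+\frac{N}{x}\right)$, which follows from $\lambda^{-k}\ge1$. Then the $k$-th summand is at most
$$
x\lambda^k\log\left(e\left(1+\frac{N}{x}\right)\right)+x\,k\lambda^k\log(1/\lambda).
$$

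Next, sum the two pieces separately. The first piece is a geometric series with ratio $\lambda\le1-\delta$. It sums to at most $\frac{x}{1-\lambda}\log\left(e\left(1+N/x\right)\right)\le\frac{x}{\delta}\log\left(e\left(1+N/x\right)\right)$.

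The second piece is $x\log(1/\lambda)\sum_{k\ge0}k\lambda^k=x\frac{\lambda\log(1/\lambda)}{(1-\lambda)^2}$. This is the step that needs care: as $\lambda\to0$, the factor $\log(1/\lambda)$ blows up, so a crude bound does not suffice. However, $\lambda\log(1/\lambda)$ is bounded on $(0,1]$, by $1/(e\ln 2)$, and $(1-\lambda)^{-2}\le\delta^{-2}$. Hence the second piece is $O(x)$, with the constant depending only on the fixed $\delta$.

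Finally, compare both contributions with $x\log(2+N/x)$. We have $\log(e(1+N/x))=\log e+\log(1+N/x)\le \log e+\log(2+N/x)$. Moreover, $\log(2+N/x)\ge1$, so additive constants are absorbed: $x\log(e(1+N/x))=O(x\log(2+N/x))$ and $O(x)=O(x\log(2+N/x))$. Combining the two pieces gives the claimed bound $O\left(x\log\left(2+\frac{N}{x}\right)\right)$. All implied constants depend only on $\delta$. The hypothesis $N\ge1$ is not actually needed beyond guaranteeing that the expressions are meaningful.
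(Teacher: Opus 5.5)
Your proof is correct. Its core computation matches the paper's: bound the logarithm by a $k$-independent term plus $k\log(1/\lambda)$, sum $\sum_k\lambda^k$ and $\sum_k k\lambda^k$, and use that $\lambda\log(1/\lambda)/(1-\lambda)^2=O(1)$ when $\lambda\le1-\delta$.

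The difference lies in the pointwise inequality. The paper uses $1+N/(x\lambda^k)\le 2(N/x)\lambda^{-k}$, which needs $N/x\ge1$, so it must treat $x>N$ as a separate case. There it splits the series at the first index $k_0$ with $x\lambda^{k_0}\le N$. It bounds the head by $O(x)$, since each term there is $O(x\lambda^k)$, and reapplies the first case to the tail to get $O(N)=O(x)$.

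Your inequality $1+N/(x\lambda^k)\le\lambda^{-k}(1+N/x)$ holds for every $x>0$, so one computation covers both regimes. When $x>N$, your first piece is already at most $(x/\delta)\log(2e)=O(x)$. Your route also removes any need for $N\ge1$, which the paper's second case implicitly uses to ensure that $k_0$ exists. The paper's version only gains the slightly more explicit form $O(x(1+\log(N/x)))$ for $x\le N$, which matches yours up to constants. Yours is the cleaner argument.
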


\begin{proof}
Let
$y_k:=x\lambda^k$.

\paragraph{The case $x\le N$.}
In this case, $N/x\ge1$, and hence
$$
1+\frac{N}{y_k}
=1+\frac{N}{x}\lambda^{-k}
\le2\frac{N}{x}\lambda^{-k}.
$$
Taking logarithms and multiplying by $y_k=x\lambda^k$ gives
$$
y_k\log\left(e\left(1+\frac{N}{y_k}\right)\right)
=O\left(
x\lambda^k\left(
1+\log\frac{N}{x}+k\log\frac{1}{\lambda}
\right)
\right).
$$
We use the standard formulas
$$
\sum_{k\ge0}\lambda^k=\frac{1}{1-\lambda}
\qquad\text{and}\qquad
\sum_{k\ge0}k\lambda^k=\frac{\lambda}{(1-\lambda)^2}.
$$
Summing the preceding bound over $k\ge0$ therefore gives
\begin{align*}
\sum_{k\ge0}y_k\log\left(e\left(1+\frac{N}{y_k}\right)\right)
&=O\left(
\frac{x}{1-\lambda}\left(1+\log\frac{N}{x}\right)
+x\frac{\lambda\log(1/\lambda)}{(1-\lambda)^2}
\right)\\
&=O\left(x\left(1+\log\frac{N}{x}\right)\right)\\
&=O\left(x\log\left(2+\frac{N}{x}\right)\right).
\end{align*}
The second line follows from $1-\lambda\ge\delta$ and
$\lambda\log(1/\lambda)/(1-\lambda)^2=O(1)$.

\paragraph{The case $x>N$.}
Let $k_0$ be the first index for which $y_{k_0}\le N$. For $k<k_0$, we have
$N/y_k<1$, so each summand is $O(y_k)$. These terms therefore contribute
$$
O\left(\sum_{k<k_0}x\lambda^k\right)=O(x).
$$
The terms beginning at $k_0$ form a geometric sequence whose first term is
$y_{k_0}\le N$. Applying the first case to this sequence shows that the tail
contributes
$$
O\left(
y_{k_0}\log\left(2+\frac{N}{y_{k_0}}\right)
\right)
=O(N).
$$
Since $x>N$, this is also $O(x)$. Finally,
$x\log(2+N/x)=\Theta(x)$ when $x>N$, which proves the claim.
\end{proof}

\begin{lemma}[Communication cost]
\label{lem:generic-communication}
The expected number of bits exchanged by the protocol is
$$
O\left(
1+Mph\log\left(2+\frac{N}{Mph}\right)
\right).
$$
\end{lemma}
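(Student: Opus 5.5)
The total communication splits into two parts, and I will bound each separately. The first part is the control bits, one per invoked round. The second part is the update data, summed over rounds.

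\textbf{Control bits.} Their expected number equals the expected number of rounds. By the proof of \Cref{lem:generic-rounds}, this is at most $2+\sum_{k\ge0}\min\{1,A\lambda^k\}$. I will bound it more crudely and use $\min\{1,A\lambda^k\}\le A\lambda^k$. Then the expected number of rounds is at most $2+A/(1-\lambda)\le 2+Mp/\delta^2=O(1+Mp)$. Since $h\ge1$ (we assume $Mph>0$ and $h$ is an integer bound), $Mp\le Mph$. Also $\log(2+\cdot)\ge1$. Hence the control bits cost $O(1+Mph\log(2+N/(Mph)))$.

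\textbf{Update data.} By \eqref{eq:generic-UZ} and \eqref{eq:generic-round-decay}, $\mu_j=\EE[U_j]\le h\EE[Z_j]\le hA\lambda^{\max\{0,j-2\}}$. \Cref{lem:generic-one-round-code} bounds the cost of round $j$ by $f(\mu_j)$, where $f(u)=u\log(e(1+N/u))$. I need $f$ to be nondecreasing so that $\mu_j$ can be replaced by its upper bound. To check this, write $f'(u)=\log(e(1+N/u))-\frac{N}{(\ln 2)(u+N)}$. This limits to $\log e-0>0$ as $u\to\infty$, and $f'$ is decreasing because $f$ is concave, so $f'\ge0$ everywhere. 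Thus round $j$ costs $O(f(hA\lambda^{\max\{0,j-2\}}))$. Rounds $1$ and $2$ together with rounds $j\ge2$ give at most twice the geometric sum in \Cref{lem:geometric-coding} with $x=hA$. That lemma yields $O(hA\log(2+N/(hA)))$; if $\mu_j=0$ the round contributes nothing, so this is harmless.

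\textbf{Converting to the stated form.} It remains to replace $hA=Mph/(1-\lambda)$ by $Mph$. Since $1\le 1/(1-\lambda)\le1/\delta$, we have $x=hA=\Theta(Mph)$. The function $g(x)=x\log(2+N/x)$ is increasing in $x$, and scaling $x$ by a constant $c\ge1$ multiplies $g$ by at most $c$, because $\log(2+N/(cx))\le\log(2+N/x)$. Hence $g(hA)\le g(Mph)/\delta$, and adding the control bits gives the lemma.

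The main subtle point is the monotonicity of $f$. It is what lets the expectation bounds be plugged in round by round, rather than treating $\mu_j$ only through Jensen. I would verify it as above via concavity and the limit of $f'$ at infinity.
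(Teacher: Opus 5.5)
Your proof is correct and follows the same route as the paper. You split the cost into control bits and update data, bound $\mu_j\le hA\lambda^{\max\{0,j-2\}}$, and combine the monotonicity of $f$ with \Cref{lem:generic-one-round-code,lem:geometric-coding}. You then absorb the factor $1/(1-\lambda)\le 1/\delta$ to pass from $hA$ to $Mph$. The only differences are minor. You bound the expected number of rounds crudely by $2+A/(1-\lambda)=O(1+Mp)$ rather than through \Cref{lem:generic-rounds}. You also verify the monotonicity of $f$ explicitly, via concavity and $f'(u)\to\log e$, which the paper only asserts.
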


\begin{proof}
If $N=0$, there are no shared counters, so no update data are sent and the
protocol terminates after its first two rounds. Hence, assume $N\ge1$.

Recall that $A=Mp/(1-\lambda)$ and let $\mu_j=\EE[U_j]$. By
\Cref{eq:generic-UZ,eq:generic-round-decay},
$$
\mu_j\le
\begin{cases}
hA,&j=1,2,\\
hA\lambda^{j-2},&j\ge3.
\end{cases}
$$
The function
$u\mapsto u\log\left(e\left(1+N/u\right)\right)$ is nondecreasing. Thus,
\Cref{lem:generic-one-round-code} bounds the first two messages by the
corresponding expression evaluated at $hA$, while the bounds for the
subsequent messages decrease geometrically. Applying
\Cref{lem:geometric-coding}, the expected total number of  bits communicated is
$$
O\left(hA\log\left(2+\frac{N}{hA}\right)\right)
=O\left(Mph\log\left(2+\frac{N}{Mph}\right)\right),
$$
excluding the control bits indicating whether $U_j=0$.
Here we used $hA=Mph/(1-\lambda)$ and
$\delta\le1-\lambda\le1$. 

Each invoked round also sends one control bit. Since $Mph>0$, we have
$h\ge1$, and therefore
$A=Mp/(1-\lambda)=O(Mph)$. By \Cref{lem:generic-rounds} and
$\lambda\le1-\delta$, the expected number of rounds is
$O(1+\log(1+A))=O(1+Mph)$. The expected number of control bits satisfies the
same bound. Finally, the logarithmic factor in the statement is at least one,
so the control bits are covered by the claimed communication bound.
\end{proof}

We can now conclude the proof of \Cref{thm:two-party-lll}.

\thmtwoparty*

\begin{proof}
The protocol is defined in \Cref{sec:lll-protocol}.  Taking $k=0$ in \Cref{lem:generic-deep-witness} shows that the expected total
number of resamplings is at most $A<\infty$. Hence only finitely many
resamplings occur almost surely, so
\Cref{lem:generic-termination} shows that the protocol terminates almost
surely and never outputs an invalid assignment.
The expected number of rounds is bounded by \Cref{lem:generic-rounds}, and the
expected number of bits exchanged is bounded by
\Cref{lem:generic-communication}.  This proves the theorem.
\end{proof}

\section{Edge coloring}
\label{sec:edge-coloring}

We now apply our two-party LLL protocol to edge coloring. Let $G=(V,E)$ be an
$n$-vertex simple graph of maximum degree at most $\Delta$, whose edge set is
partitioned as $E=E_A\cup E_B$. Alice knows $E_A$ and Bob knows $E_B$, and
each party must color its own edges so that together they form a proper edge
coloring of $G$. For a vertex $v$, let $\deg_A(v)$ and $\deg_B(v)$ denote its
degrees in the two edge sets.

The main challenge is to let Alice and Bob reuse essentially the same color
space without knowing each other's edges. Our idea is to divide most of the
colors into a small number of \emph{regular palettes}. At each vertex, shared
randomness determines which regular palettes Alice is allowed to use and
which Bob is allowed to use. These two sets are always disjoint.

To color an edge $uv$, a party may use only a regular palette that is
permitted at both $u$ and $v$. Thus an Alice edge and a Bob edge assigned to
the same regular palette can never share an endpoint: at any common endpoint,
that palette would have to be permitted to both parties, contradicting the
disjointness of their permitted sets there. This rules out cross-party
conflicts within every regular palette.

It may happen, however, that an edge $uv$ has no regular palette permitted at
both endpoints. Each party places such an edge in its \emph{fallback graph}.
Alice and Bob color their fallback graphs using separate palettes, so fallback
edges belonging to different parties cannot create a color conflict.

We now define the palettes formally. Let $P_1,\ldots,P_q$ be pairwise
disjoint regular palettes, and let $P_A^F$ and $P_B^F$ be two additional
palettes, disjoint from each other and from all regular palettes. Each
regular palette contains $L+1$ colors, while each fallback palette contains
$\lfloor\xi\Delta\rfloor+1$ colors. The parameters $q$, $L$, and $\xi$ are
specified in \Cref{sec:edge-parameters}.

\subsection{The construction}
\label{sec:construction}

We first describe the complete randomized construction.  The construction has
four steps.

\begin{description}[leftmargin=2.7cm,style=nextline]
\item[Quotas.]
At each vertex $v$, we first determine how many of the $q$ regular palettes
Alice and Bob may use. Alice's quota is
$$
s_A(v):=\left\lceil
q\left(
\underbrace{
\frac{1}{1+4\xi}\frac{\deg_A(v)}{\Delta}
}_{\text{degree-proportional term}}
+
\underbrace{
\frac{\xi}{1+4\xi}
}_{\text{additive cushion}}
\right)
\right\rceil,
$$
and Bob's quota is defined symmetrically by
$$
s_B(v):=\left\lceil
q\left(
\frac{1}{1+4\xi}\frac{\deg_B(v)}{\Delta}
+
\frac{\xi}{1+4\xi}
\right)
\right\rceil.
$$
The first term makes each party's quota roughly proportional to its local
degree. The additive cushion ensures that even a party with small
local degree is permitted to use sufficiently many regular palettes.

The parameters are chosen so that $s_A(v)+s_B(v)\le q$, allowing Alice's
and Bob's sets of permitted palettes to be disjoint. Each party can
compute its own quota from its local degree, without learning the other
party's degree.

\item[Routing.]
For every Alice edge $e$, independently sample a uniformly random permutation
$\pi_e^A$ of $[q]$.  For $e=uv$, let
$I_A(e):=S_A(u)\cap S_A(v)$ be the set of indices $j$ for which Alice is
permitted to use $P_j$ at both endpoints. If $I_A(e)\ne\varnothing$, Alice
assigns $e$ to $P_j$, where $j$ is the first element of $I_A(e)$ according to
$\pi_e^A$. Conditioned on $I_A(e)$, this index is uniformly distributed over
$I_A(e)$. If $I_A(e)=\varnothing$, Alice places $e$ in her fallback graph.
Bob applies the symmetric rule, using independent random permutations
$\pi_e^B$ for his edges.

For every $j\in[q]$, let $G_j^A$ be the subgraph consisting of Alice's edges
assigned to $P_j$, and let $G_A^F$ be the subgraph consisting of Alice's
fallback edges. Define $G_j^B$ and $G_B^F$ symmetrically.

\item[Coloring.]
After the LLL has fixed the random choices, Alice colors $G_j^A$ using $P_j$
for every $j\in[q]$, while Bob colors $G_j^B$ using the same palette. Alice
colors $G_A^F$ using $P_A^F$, and Bob colors $G_B^F$ using $P_B^F$. All these
colorings are computed locally using Vizing's theorem.
\end{description}

The important point is that Alice and Bob can safely reuse each regular
palette. Suppose that an edge of $G_j^A$ and an edge of $G_j^B$ shared an
endpoint $v$. Then $j\in S_A(v)\cap S_B(v)$, contradicting the disjointness
of the two permission sets. Thus, no vertex is incident to edges in both
$G_j^A$ and $G_j^B$.

It therefore suffices to choose the random variables so that, for each party,
every graph $G_j^A$ and $G_j^B$ has maximum degree at most $L$, while $G_A^F$
and $G_B^F$ have maximum degree at most $\xi\Delta$. Under these conditions,
Vizing's theorem guarantees that the corresponding palettes are large enough.
These are exactly the conditions that we enforce using the LLL.

\subsection{Parameters}
\label{sec:edge-parameters}

We choose the parameters so that all palettes fit within the available budget
of $(1+\eps)\Delta$ colors. The $q$ regular palettes are shared by Alice and
Bob, and each contains $L+1$ colors. In addition, each party has a separate
fallback palette containing $\lfloor\xi\Delta\rfloor+1$ colors. Thus, the
total number of colors is
$$
q(L+1)+2\left(\lfloor\xi\Delta\rfloor+1\right).
$$
Our parameter choices will ensure that this quantity is at most
$(1+\eps)\Delta$, while also ensuring that the permission quotas fit and
fallback edges are sufficiently rare.

Set
$$
\xi:=\frac{\eps}{20},
\qquad
q:=
\left\lceil
\left(\frac{1+4\xi}{\xi}\right)^2
\ln\frac{4e}{\xi}
\right\rceil,
\qquad
L:=\left\lceil\frac{(1+4\xi)\Delta}{q}\right\rceil.
$$
Here $\xi$ determines the size of the fallback palettes, $q$ is the number of
regular palettes, and $L$ is the maximum degree that each regular palette is
designed to handle.

\begin{lemma}[The quotas fit]
\label{lem:quota-fit}
For every vertex $v$, $s_A(v)+s_B(v)\le q$, so 
$S_A(v)\cap S_B(v)=\varnothing$.
\end{lemma}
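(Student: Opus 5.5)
The argument is a direct calculation. I will bound the sum of the two ceilings by the sum of their arguments plus two, and then use that $q$ is large compared with $1/\xi$. For the disjointness claim I will rely on the permission sets being defined as the construction describes them. $S_A(v)$ consists of the first $s_A(v)$ entries of the shared permutation $\sigma_v$ of $[q]$, and $S_B(v)$ consists of its last $s_B(v)$ entries. Once $s_A(v)+s_B(v)\le q$ is established, a prefix of length $s_A(v)$ and a suffix of length $s_B(v)$ of a sequence of length $q$ cannot overlap, so $S_A(v)\cap S_B(v)=\varnothing$.

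\textbf{Step 1.} Since $G$ is simple with maximum degree at most $\Delta$ and $E=E_A\cup E_B$ is a partition, we have $\deg_A(v)+\deg_B(v)\le\Delta$. Adding the two expressions inside the ceilings therefore gives
$$
q\left(\frac{1}{1+4\xi}\cdot\frac{\deg_A(v)+\deg_B(v)}{\Delta}+\frac{2\xi}{1+4\xi}\right)\le q\cdot\frac{1+2\xi}{1+4\xi}.
$$

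\textbf{Step 2.} Using $\lceil x\rceil<x+1$ for both ceilings,
$$
s_A(v)+s_B(v)<q\cdot\frac{1+2\xi}{1+4\xi}+2=q-\frac{2\xi q}{1+4\xi}+2.
$$
It therefore suffices to show that $\frac{2\xi q}{1+4\xi}\ge 2$, that is, $q\ge(1+4\xi)/\xi$. Then $s_A(v)+s_B(v)<q$, and the claimed (non-strict) inequality follows a fortiori.

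\textbf{Step 3.} We check $q\ge(1+4\xi)/\xi$. Since $\xi=\eps/20<1$, we have $r:=(1+4\xi)/\xi>1$, so $r^2\ge r$. Also $4e/\xi>e$, so $\ln(4e/\xi)>1$. Hence
$$
q\ge r^2\ln\frac{4e}{\xi}\ge r.
$$

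The only step needing any care is Step 2, namely absorbing the rounding from the two ceilings. This is exactly what the cushion term provides: the slack $2\xi q/(1+4\xi)$ between $q(1+2\xi)/(1+4\xi)$ and $q$ must be at least $2$, and Step 3 shows the choice of $q$ guarantees this with a large margin.
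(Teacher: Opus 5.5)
Your proof is correct and follows the paper's argument essentially line for line. You remove the two ceilings at a cost of at most $2$, use $\deg_A(v)+\deg_B(v)\le\Delta$ to reach $q-\frac{2\xi q}{1+4\xi}+2$, and conclude disjointness from the prefix/suffix structure of $S_A(v)$ and $S_B(v)$ in $\sigma_v$; your Step 3 also makes explicit the check $q\ge(1+4\xi)/\xi$, which the paper only attributes to the definition of $q$.
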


\begin{proof}
Recall that
\begin{align*}
s_A(v)
&:=\left\lceil
q\left(
\frac{1}{1+4\xi}\frac{\deg_A(v)}{\Delta}
+\frac{\xi}{1+4\xi}
\right)
\right\rceil, \\
s_B(v)
&:=\left\lceil
q\left(
\frac{1}{1+4\xi}\frac{\deg_B(v)}{\Delta}
+\frac{\xi}{1+4\xi}
\right)
\right\rceil.
\end{align*}
Using $\lceil x\rceil\le x+1$ for each quota, we obtain
\begin{align*}
s_A(v)+s_B(v)
&\le
\frac{q}{1+4\xi}
\left(\frac{\deg_A(v)+\deg_B(v)}{\Delta}+2\xi\right)+2
&&\text{removing the two ceilings} \\
&\le \frac{q(1+2\xi)}{1+4\xi}+2
&&\text{since }\deg_A(v)+\deg_B(v)\le\Delta \\
&= q-\frac{2\xi q}{1+4\xi}+2
  \\
&\le q
&&\text{since }\frac{2\xi q}{1+4\xi}\ge2.
\end{align*}
The last inequality follows from the definition of $q$. Therefore,
$s_A(v)+s_B(v)\le q$.

Finally, $S_A(v)$ is a prefix of $\sigma_v$ of length $s_A(v)$, while
$S_B(v)$ is a suffix of length $s_B(v)$. Since their total length is at most
$q$, these two sets are disjoint.
\end{proof}

We next verify the color budget. Recall that each regular palette $P_j$ has
$L+1$ colors, while each fallback palette has
$\lfloor\xi\Delta\rfloor+1$ colors.

\begin{lemma}[The color budget fits]
\label{lem:palette}
The palettes $P_1,\ldots,P_q,P_A^F,P_B^F$ can be chosen pairwise disjoint
using at most $(1+\eps)\Delta$ colors, provided that $\Delta$ is
at least some sufficiently large constant depending on $\eps$.
\end{lemma}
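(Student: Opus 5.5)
Disjointness is immediate, since we simply take the $q+2$ palettes to be disjoint blocks of consecutive integers. The only content is bounding the total number of colors,
$$
q(L+1)+2\left(\lfloor\xi\Delta\rfloor+1\right),
$$
by $(1+\eps)\Delta$. I plan to bound each term separately, using only that $q$ and $\xi$ are constants depending on $\eps$.

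\textbf{Regular palettes.} By the definition of $L$, $L\le (1+4\xi)\Delta/q+1$. Hence
$$
q(L+1)\le (1+4\xi)\Delta+2q.
$$

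\textbf{Fallback palettes.} Directly, $2(\lfloor\xi\Delta\rfloor+1)\le 2\xi\Delta+2$.

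\textbf{Combining.} Adding the two bounds, the total is at most
$$
(1+6\xi)\Delta+2q+2.
$$
With $\xi=\eps/20$ we get $6\xi=3\eps/10$, so the total is at most $\Delta+\frac{3\eps}{10}\Delta+2q+2$. It then suffices that
$$
2q+2\le \frac{7\eps}{10}\Delta,
$$
that is, $\Delta\ge \Delta_0(\eps):=\frac{10(2q+2)}{7\eps}$. This threshold depends only on $\eps$, because $q$ is determined by $\xi=\eps/20$. If the number of available colors must be an integer, the integer total is then at most $\lfloor(1+\eps)\Delta\rfloor$.

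There is no real obstacle. The only care needed is tracking the two rounding losses: one from the ceiling in $L$, which costs $q$ colors, and one from the $+1$ in each palette, which costs $q+2$ colors. Both are absorbed by requiring $\Delta$ to be a large enough constant, which is exactly why the statement carries that hypothesis.
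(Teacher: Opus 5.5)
Your proposal is correct and follows essentially the same route as the paper: you bound $q(L+1)\le(1+4\xi)\Delta+2q$ and $2(\lfloor\xi\Delta\rfloor+1)\le 2\xi\Delta+2$, which gives a total of at most $(1+6\xi)\Delta+2q+2$. You then absorb $2q+2$ into $\tfrac{7\eps}{10}\Delta$ for large $\Delta$. Your explicit threshold $\Delta_0=\tfrac{10(2q+2)}{7\eps}$ and the integrality remark are harmless additions.
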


\begin{proof}
Since $L\le(1+4\xi)\Delta/q+1$, the regular palettes use at most
$$
q(L+1)\le (1+4\xi)\Delta+2q
$$
colors. The two fallback palettes use at most
$$
2\left(\lfloor\xi\Delta\rfloor+1\right)\le2\xi\Delta+2
$$
colors. Hence, the total number of colors is at most
$$
(1+6\xi)\Delta+2q+2.
$$
Here the term $2q$ accounts for both the rounding in $L$ and the extra color
required by Vizing's theorem for the subgraphs using each regular palette.
Since $\xi=\eps/20$, we have $6\xi=3\eps/10$.  Moreover, $q=q(\eps)$ is a
constant.  Thus, when $\Delta$ is at least some sufficiently large constant
depending only on $\eps$, we have $2q+2\le 7\eps\Delta/10$.  The total is then
at most $(1+\eps)\Delta$.
\end{proof}

Finally, we record the bounds used to control fallback edges. The additive cushion in each quota gives, for every vertex $v$,
\begin{equation}
\label{eq:quota-lower-bound}
s_A(v)\ge\frac{\xi q}{1+4\xi}
\qquad\text{and}\qquad
s_B(v)\ge\frac{\xi q}{1+4\xi}.
\end{equation}
Thus, each party receives a constant fraction of the regular palettes at
every vertex. Moreover, the choice of $q$ gives
\begin{equation}
\label{eq:q-fallback}
\exp\left(-\left(\frac{\xi}{1+4\xi}\right)^2q\right)
\le\frac{\xi}{4e}.
\end{equation}
The two bounds above will be used in the probability analysis in \Cref{sec:analysis}.

\subsection{Probability analysis}
\label{sec:analysis}

Two types of overload can occur at a vertex. A regular palette is overloaded
if more than $L$ incident edges are routed to it, while the fallback palette
is overloaded if more than $\xi\Delta$ incident edges use it. For $v\in V$
and $j\in[q]$, define
$$
D_j^A(v):=\deg_{G_j^A}(v)
\qquad\text{and}\qquad
F_A(v):=\deg_{G_A^F}(v).
$$
Define $D_j^B(v)$ and $F_B(v)$ symmetrically. We show that each overload
probability is exponentially small in $\Delta$.

\begin{lemma}[Degree of the regular subgraphs]
\label{lem:overload-tail}
There exists a constant $c_1=c_1(\eps)>0$ such that, for every vertex $v$ and
every $j\in[q]$, both $\Prb[D_j^A(v)>L]$ and $\Prb[D_j^B(v)>L]$ are at most
$e^{-c_1\Delta}$.
\end{lemma}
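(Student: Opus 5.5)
We condition on the shared permutations and bound the overload by a Chernoff estimate. By symmetry it suffices to treat Alice. Fix $v$ and $j$, and condition on all shared permutations $\sigma_w$. Then the sets $S_A(w)$ are deterministic, and so is $I_A(e)$ for every Alice edge $e$. If $j\notin S_A(v)$, then $D_j^A(v)=0$ and there is nothing to prove. Otherwise, $D_j^A(v)=\sum_{e\ni v,\, e\in E_A}\1[e\text{ routed to }P_j]$. This is a sum of $\deg_A(v)$ independent indicators, because the permutations $\pi_e^A$ are independent across edges. The indicator for $e=uv$ has probability $\1[j\in I_A(e)]/|I_A(e)|$.

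To bound this probability, we note that if the events $j\in I_A(e)$ and $j\in S_A(v)$ hold, we only know $|I_A(e)|\ge1$. So the conditional mean is not automatically small, and a bound conditioned on the shared permutations alone is not enough. The natural fix is to average over the randomness of $\sigma_u$ for the other endpoint as well. Since $\sigma_u$ are independent across distinct neighbors $u$ (the graph is simple), the indicators $\1[e\to P_j]$ for $e\ni v$ remain independent once we condition only on $\sigma_v$. For a fixed edge $e=uv$, the set $S_A(u)$ is a uniformly random $s_A(u)$-subset of $[q]$. Conditioned on $j\in S_A(u)$, the set $I_A(e)$ is $\{j\}$ together with a random subset of $S_A(v)\setminus\{j\}$, and by symmetry of $j$ within $S_A(v)$ we get
$$\Prb[e\to P_j\mid \sigma_v]=\frac{1}{s_A(v)}\Prb[I_A(e)\neq\varnothing]\le\frac{1}{s_A(v)}.$$
Here we use that, given $I_A(e)$, the routed index is uniform on $I_A(e)$, so each element of $S_A(v)$ is equally likely to be the chosen palette.

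It follows that $\EE[D_j^A(v)\mid\sigma_v]\le \deg_A(v)/s_A(v)$. By the definition of the quota, $s_A(v)\ge q\deg_A(v)/((1+4\xi)\Delta)$, so
$$\EE[D_j^A(v)\mid\sigma_v]\le\frac{(1+4\xi)\Delta}{q}\le L.$$
This alone gives no room, since we need the tail above $L$ to be exponentially small. We therefore use the cushion: $s_A(v)\ge q(\deg_A(v)+\xi\Delta)/((1+4\xi)\Delta)$, which gives the mean bound
$$\mu\le \frac{(1+4\xi)\Delta}{q}\cdot\frac{\deg_A(v)}{\deg_A(v)+\xi\Delta}\le \frac{L}{1+\xi},$$
using $\deg_A(v)\le\Delta$. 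Since $L=\Theta(\Delta/q)$, the Chernoff bound for independent indicators, applied against the upper bound $L/(1+\xi)$ on the mean, gives
$$\Prb[D_j^A(v)>L]\le\exp\bigl(-\Omega(\xi^2 L)\bigr)=e^{-c_1\Delta}$$
with $c_1=\Theta(\xi^2/q)$, which depends only on $\eps$. Averaging over $\sigma_v$ completes the argument.

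The main obstacle is the independence and symmetry step. We must check that conditioning on $\sigma_v$ leaves the edges at $v$ independent, since they involve distinct $\sigma_u$ and distinct $\pi_e^A$. We must also check that the probability that $e$ chooses $j$ is at most $1/s_A(v)$. This needs the uniformity of the prefix set $S_A(u)$ and the uniform choice within $I_A(e)$, together with the exchangeability of the elements of $S_A(v)$ with respect to $\sigma_u$ and $\pi_e^A$.
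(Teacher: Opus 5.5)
Your proposal is correct and follows essentially the same route as the paper: condition only on $\sigma_v$, use independence across the distinct neighbors $u$ and exchangeability of the palettes in $S_A(v)$ to get $\Prb[uv\text{ is assigned to }j\mid\sigma_v]\le 1/s_A(v)$, use the cushion to bound the mean by $L/(1+\xi)$, and apply Chernoff with $\eta=\xi$. The opening attempt to condition on all the $\sigma_w$ is an unnecessary detour, but your final argument matches the paper's proof step for step.
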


\begin{proof}
We prove the statement for Alice.  If $j\notin S_A(v)$, then
$D_j^A(v)=0$.  Suppose therefore that $j\in S_A(v)$, and condition on
$\sigma_v$.

Consider an Alice edge $uv$ incident to $v$. By symmetry, every palette in
$S_A(v)$ has the same probability of being selected for $uv$: the permission
set $S_A(u)$ and the routing permutation
$\pi_{uv}^A$ are both generated uniformly and do not favor any palette in
$S_A(v)$. Since $uv$ is assigned to at most one regular palette, each of the
$s_A(v)$ palettes is selected with probability at most $1/s_A(v)$. Therefore,
for every $j\in S_A(v)$,
$$
\Prb[uv\text{ is assigned to }j\mid\sigma_v]
\le\frac{1}{s_A(v)}.
$$
For the fixed index $j$, write
$$
D_j^A(v)
=
\sum_{u:uv\in E_A}
\1[uv\text{ is assigned to }j].
$$
Conditioned on $\sigma_v$, the summands are mutually independent. Indeed,
whether $uv$ is assigned to $j$ is determined by $\sigma_u$ and
$\pi_{uv}^A$. Because $G$ is simple, different Alice edges incident to $v$
have different opposite endpoints $u$, and hence their assignments depend on
disjoint sets of independent random variables.

Let $\mu:=\EE[D_j^A(v)\mid\sigma_v]$. Summing the preceding probability
bound over the $\deg_A(v)$ Alice edges incident to $v$ gives
\begin{align*}
\mu
&\le\frac{\deg_A(v)}{s_A(v)} \\
&\le
\frac{(1+4\xi)\Delta}{q}
\cdot\frac{\deg_A(v)}{\deg_A(v)+\xi\Delta}
&&\text{since }
s_A(v)=\left\lceil
\frac{q\left(\deg_A(v)+\xi\Delta\right)}{(1+4\xi)\Delta}
\right\rceil \\
&\le\frac{(1+4\xi)\Delta}{q(1+\xi)}
&&\text{since }\deg_A(v)\le\Delta \\
&\le\frac{L}{1+\xi}
&&\text{since }L\ge\frac{(1+4\xi)\Delta}{q}.
\end{align*}
This calculation exhibits the first role of the additive cushion: it creates a
constant multiplicative gap between the expected load and the capacity $L$.

We use a standard multiplicative Chernoff bound: if $Y$ is a sum of
independent Bernoulli random variables with mean at most $\bar\mu$, then, for
every $\eta>0$,
$$
\Prb[Y\ge(1+\eta)\bar\mu]
\le\exp\left(-\frac{\eta^2\bar\mu}{2+\eta}\right).
$$
In our setting, $\bar\mu:=L/(1+\xi)$ is an upper bound on the conditional
mean $\mu$. Since $L=(1+\xi)\bar\mu$, applying the bound with $\eta=\xi$
gives
\begin{align*}
\Prb[D_j^A(v)>L\mid\sigma_v]
&\le\Prb[D_j^A(v)\ge L\mid\sigma_v]\\
&\le\exp\left(-\frac{\xi^2\bar\mu}{2+\xi}\right)\\
&=\exp\left(-\frac{\xi^2L}{(1+\xi)(2+\xi)}\right).
\end{align*}
Since $L\ge(1+4\xi)\Delta/q$ and both $\xi$ and $q$ depend only on $\eps$,
the last expression is at most $e^{-c_1\Delta}$ for some constant
$c_1=c_1(\eps)>0$. Averaging over $\sigma_v$ proves the unconditional bound.
The proof for Bob is identical.
\end{proof}

\begin{lemma}[Fallback degree]
\label{lem:fallback-tail}
There exists a constant $c_2=c_2(\eps)>0$ such that, for every vertex $v$,
both $\Prb[F_A(v)>\xi\Delta]$ and $\Prb[F_B(v)>\xi\Delta]$ are at most
$e^{-c_2\Delta}$.
\end{lemma}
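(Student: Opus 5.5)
We prove the bound for Alice; Bob's case is symmetric. The plan mirrors the proof of \Cref{lem:overload-tail}: condition on $\sigma_v$, show that each Alice edge incident to $v$ falls back with small probability, and then apply a Chernoff bound to the independent indicators. Write
$$
F_A(v)=\sum_{u:uv\in E_A}\1\left[S_A(u)\cap S_A(v)=\varnothing\right].
$$
Once $\sigma_v$ is fixed, the summand for $uv$ depends only on $\sigma_u$. Since $G$ is simple, distinct Alice edges at $v$ have distinct opposite endpoints $u$, so conditioned on $\sigma_v$ these indicators are mutually independent.

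\textbf{Step 1: per-edge fallback probability.} Fix $\sigma_v$, so $S_A(v)$ is a fixed set of size $s:=s_A(v)$. Since $\sigma_u$ is a uniform permutation, $S_A(u)$ is a uniformly random subset of $[q]$ of size $t:=s_A(u)$. The probability that it avoids $S_A(v)$ is
$$
\binom{q-s}{t}\Big/\binom{q}{t}\le\left(1-\frac{s}{q}\right)^t\le e^{-st/q}.
$$
By \Cref{eq:quota-lower-bound}, both $s$ and $t$ are at least $\xi q/(1+4\xi)$. Hence the exponent is at least $(\xi/(1+4\xi))^2q$, and \Cref{eq:q-fallback} bounds the per-edge probability by $\xi/(4e)$. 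This step is where the additive cushion and the choice of $q$ are used; it is the only non-routine part, and it reduces to the elementary hypergeometric estimate above.

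\textbf{Step 2: tail bound.} The conditional mean is at most $\deg_A(v)\cdot\xi/(4e)\le\xi\Delta/(4e)$. The threshold $\xi\Delta$ exceeds this mean by a factor of $4e$. We therefore use the large-deviation form of the Chernoff bound: for a sum of independent indicators with mean at most $\bar\mu$ and $a\ge e^2\bar\mu$ (more generally $a\ge 2e\bar\mu$),
$$
\Prb[Y\ge a]\le\left(\frac{e\bar\mu}{a}\right)^a\le 2^{-a}.
$$
Taking $\bar\mu:=\xi\Delta/(4e)$ and $a:=\xi\Delta$ gives $e\bar\mu/a=1/4$. Hence
$$
\Prb[F_A(v)>\xi\Delta\mid\sigma_v]\le 4^{-\xi\Delta}.
$$
Alternatively, the multiplicative bound already stated in the paper, applied with $\eta=4e-1$, gives $\exp(-\Omega(\xi\Delta))$.

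Averaging over $\sigma_v$ yields $\Prb[F_A(v)>\xi\Delta]\le e^{-c_2\Delta}$ with, for example, $c_2:=\xi\ln 4$, which depends only on $\eps$. The case $F_B(v)$ is identical.
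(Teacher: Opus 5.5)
Your proposal is correct and follows essentially the same route as the paper: condition on $\sigma_v$, bound each edge's fallback probability by $\xi/(4e)$ via the hypergeometric estimate together with \eqref{eq:quota-lower-bound} and \eqref{eq:q-fallback}, then apply the $(e\mu/x)^x$ Chernoff bound at $x=\xi\Delta$ to obtain $4^{-\xi\Delta}$, i.e.\ $c_2=\xi\ln 4$. Two small differences are harmless: working with an upper bound $\bar\mu$ on the mean (valid because the threshold exceeds it) removes the need for the paper's separate $\mu=0$ case, and the case $s+t>q$, which the paper treats separately, is already covered by your estimate because the ratio $\binom{q-s}{t}/\binom{q}{t}$ is then zero.
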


\begin{proof}
Again we prove the statement for Alice and condition on $\sigma_v$.  Consider
an Alice edge $uv$, and write $s_u:=s_A(u)$ and $s_v:=s_A(v)$.  If
$s_u+s_v>q$, then $S_A(u)$ and $S_A(v)$ must intersect, so $uv$ cannot be a
fallback edge.  Suppose that $s_u+s_v\le q$.  Conditioned on $\sigma_v$, the
set $S_A(u)$ is a uniformly random $s_u$-subset of $[q]$.  Therefore,
\begin{align*}
\Prb[I_A(uv)=\varnothing\mid\sigma_v]
&=\frac{\binom{q-s_v}{s_u}}{\binom{q}{s_u}} \\
&\le\left(1-\frac{s_v}{q}\right)^{s_u}
&&\text{by sampling without replacement} \\
&\le\exp\left(-\frac{s_us_v}{q}\right)
&&\text{since }1-x\le e^{-x} \\
&\le\exp\left(-\left(\frac{\xi}{1+4\xi}\right)^2q\right)
   &&\text{by \eqref{eq:quota-lower-bound}} \\
&\le\frac{\xi}{4e}
&&\text{by \eqref{eq:q-fallback}}.
\end{align*}
This calculation shows the second role of the additive cushion: even when one
or both endpoints have small degree in Alice's subgraph, it keeps both
permission sets large enough that their intersection is unlikely to be empty.

For every Alice edge $uv$ incident to $v$, let
$$
Y_{uv}:=\1[I_A(uv)=\varnothing]
$$
indicate whether $uv$ is placed in the fallback graph. Then
$$
F_A(v)=\sum_{u:uv\in E_A}Y_{uv}.
$$
Under the conditioning on $\sigma_v$, the variable $Y_{uv}$ depends only on
$\sigma_u$. Because $G$ is simple, different Alice edges incident to $v$
have different opposite endpoints $u$. Their indicator variables therefore
depend on distinct independent permutations and are mutually independent.

The preceding calculation shows that
$\EE[Y_{uv}\mid\sigma_v]\le\xi/(4e)$ for every Alice edge $uv$ incident to
$v$. Hence, writing $\mu:=\EE[F_A(v)\mid\sigma_v]$, we have
$$
\mu\le\frac{\xi}{4e}\deg_A(v)\le\frac{\xi\Delta}{4e}.
$$
We use the following standard Chernoff bound: if $Y$ is a sum of independent
Bernoulli random variables, then, for every $x\ge\EE[Y]$,
$$
\Prb[Y\ge x]\le\left(\frac{e\EE[Y]}{x}\right)^x.
$$
If $\mu=0$, then $F_A(v)=0$ with probability one and the claim is immediate.
Otherwise, set $x:=\xi\Delta$. Since $\mu\le x/(4e)$, the Chernoff bound gives
\begin{align*}
\Prb[F_A(v)>\xi\Delta\mid\sigma_v]
&\le\Prb[F_A(v)\ge x\mid\sigma_v]\\
&\le\left(\frac{e\mu}{x}\right)^x
&&\text{by the Chernoff bound}\\
&\le4^{-x}
&&\text{since }\mu\le\frac{x}{4e}\\
&=e^{-(\xi\ln4)\Delta}.
\end{align*}
Averaging over $\sigma_v$ proves the claim with $c_2=\xi\ln4$. The proof for
Bob is identical.
\end{proof}

\subsection{The LLL instance}
\label{sec:lll}

We now define an LLL instance whose bad events correspond to violations
of the required degree bounds. For each vertex $v$, Alice's bad event
occurs if $v$ has degree greater than $L$ in one of her regular subgraphs
or greater than $\xi\Delta$ in her fallback graph. Formally, define
$$
\mathcal B_A(v):=
\left\{\max_{j\in[q]}D_j^A(v)>L\right\}
\cup
\left\{F_A(v)>\xi\Delta\right\},
$$
and define Bob's bad event $\mathcal B_B(v)$ symmetrically.
Thus, if no bad event occurs, every regular subgraph $G_j^A$ and $G_j^B$
has maximum degree at most $L$, and both fallback graphs $G_A^F$ and
$G_B^F$ have maximum degree at most $\xi\Delta$.

The shared variables are the vertex permutations $\sigma_v$. Alice's private
variables are the edge permutations $\pi_e^A$, and Bob's private variables
are the edge permutations $\pi_e^B$. Let $N_A(v)$ and $N_B(v)$ denote the
Alice and Bob neighborhoods of $v$, and write
$N_A[v]:=N_A(v)\cup\{v\}$ and $N_B[v]:=N_B(v)\cup\{v\}$.

To apply \Cref{thm:two-party-lll}, we specify five parameters: the number $N$
of shared variables, an upper bound $M$ on the number of bad events, an
upper bound $p$ on the probability of each bad event, an upper bound $h$
on the number of shared variables involved in any one bad event, and an
upper bound $D$ on the dependency degree.

\begin{lemma}[Parameters of the LLL instance]
\label{lem:edge-lll-parameters}
There exists a constant $c=c(\eps)>0$ such that, when $\Delta$ is at least
some sufficiently large constant depending only on $\eps$, the LLL instance
above has the valid parameter bounds
$$
N=n,\qquad M=2n,\qquad p=e^{-c\Delta},\qquad
h=\Delta+1,\qquad D=2\Delta^2+1.
$$
\end{lemma}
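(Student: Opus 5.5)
We verify the five parameters one at a time, relying on the structure of the construction together with the tail bounds of \Cref{lem:overload-tail,lem:fallback-tail}.

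\emph{Counting parameters.} The shared variables are exactly the permutations $\sigma_v$, one per vertex, so $N=n$. Each vertex carries one Alice event $\mathcal B_A(v)$ and one Bob event $\mathcal B_B(v)$, giving $M=2n$.

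\emph{Variable sets.} Next we determine $\vbl(\mathcal B_A(v))$. The loads $D_j^A(v)$ and $F_A(v)$ depend only on $S_A(w)$ for $w\in N_A[v]$ and on the routing permutations $\pi^A_{uv}$ of Alice edges at $v$. The quotas $s_A(w)$ are deterministic functions of the input, so $S_A(w)$ depends only on $\sigma_w$. Hence
\[
\vbl(\mathcal B_A(v))\subseteq\{\sigma_w:w\in N_A[v]\}\cup\{\pi^A_{uv}:uv\in E_A\}.
\]
Since $|N_A[v]|\le\Delta+1$, this gives $h=\Delta+1$, and it also confirms that Alice's events use only shared variables and her own private variables, as the partitioned model requires.

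\emph{Dependency degree.} Suppose $\mathcal B_A(v)$ shares a variable with another event $\mathcal B_X(w)$. There are two cases.
\begin{itemize}
\item If they share a private $\pi^A_{uv}$, then $X=A$ and $w$ is an Alice neighbour of $v$, which is already covered by the shared case below.
\item If they share some $\sigma_x$, then $x\in N_A[v]\cap N_X[w]$, so $w$ lies within distance $2$ of $v$.
\end{itemize}
The number of vertices $w\ne v$ within distance $2$ is at most $\Delta+\Delta(\Delta-1)=\Delta^2$. Each such $w$ contributes at most two events, and we add the event $\mathcal B_B(v)$ at $v$ itself (which shares $\sigma_v$). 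This gives at most $2\Delta^2+1$ neighbours, so $D=2\Delta^2+1$.

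\emph{Probability bound.} By a union bound over the $q$ palettes and the fallback event, \Cref{lem:overload-tail,lem:fallback-tail} give
\[
\Prb[\mathcal B_A(v)]\le q e^{-c_1\Delta}+e^{-c_2\Delta}\le e^{-c\Delta}
\]
with, for instance, $c=\tfrac12\min\{c_1,c_2\}$, once $\Delta$ is large enough that $q+1\le e^{c\Delta}$. The same bound holds for Bob's events.

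\emph{Main obstacle.} The only delicate point is the dependency count, where we must take care not to double count and to include the partner event at $v$ itself. For this lemma only the validity of the bounds matters, so a slightly loose count is harmless. The LLL condition $ep(D+1)\le1-\delta$ is not part of this statement; it follows later because $e\cdot e^{-c\Delta}(2\Delta^2+2)\to0$, which again only needs $\Delta$ to be large.
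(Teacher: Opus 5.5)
Your proof is correct and follows essentially the same route as the paper. The parameters $N$, $M$ and $h$ are counted the same way, and $p$ comes from the same union bound over the $q$ regular palettes and the fallback event (you additionally make $c=\tfrac12\min\{c_1,c_2\}$ explicit). The distance-two argument yielding $2(\Delta^2+1)-1=2\Delta^2+1$ dependent events is also the paper's argument.
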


\begin{proof}
There is one shared permutation $\sigma_v$ for each vertex, so $N=n$. There
is one Alice bad event and one Bob bad event at each vertex, so $M=2n$.

We next bound the probability of a bad event. By
\Cref{lem:overload-tail,lem:fallback-tail} and a union bound,
\begin{align*}
\Prb[\mathcal B_A(v)]
&\le\sum_{j=1}^q\Prb[D_j^A(v)>L]
   +\Prb[F_A(v)>\xi\Delta] \\
&\le q e^{-c_1\Delta}+e^{-c_2\Delta}
&&\text{by \Cref{lem:overload-tail,lem:fallback-tail}} \\
&\le e^{-c\Delta}
&&\text{for some }c=c(\eps)>0
\end{align*}
when $\Delta$ is at least some sufficiently large constant depending only on
$\eps$.  The same bound holds for $\mathcal B_B(v)$.

We next count the shared variables involved in each event. The event
$\mathcal B_A(v)$ depends only on the shared permutations $\sigma_u$ for
$u\in N_A[v]$ and the private permutations $\pi_e^A$ for Alice edges $e$
incident to $v$.  It therefore involves at most
$\deg_A(v)+1\le\Delta+1$ shared variables.  The analogous statement holds for
Bob, so $h=\Delta+1$ is a valid upper bound.

Finally, consider the dependency degree of a bad event centered at
$v$. Any bad event dependent on it must be centered within distance two of
$v$. Indeed, if the two events share a vertex permutation $\sigma_x$, then
both centers are either $x$ or neighbors of $x$. If they share an edge
permutation, then their centers are the two endpoints of that edge.

There are at most
$1+\Delta+\Delta(\Delta-1)=\Delta^2+1$ vertices within distance two of $v$.
Each such vertex is the center of one Alice event and one Bob event. Excluding
the original event, there are therefore at most
$2(\Delta^2+1)-1=2\Delta^2+1$ dependent events. Hence
$D=2\Delta^2+1$ is a valid upper bound.
\end{proof}

\subsection{Applying the two-party LLL protocol}
\label{sec:edge-lll-invocation}

We now apply the two-party LLL protocol of \Cref{thm:two-party-lll} to the LLL instance above. We call the
execution of this protocol, during which the vertex and edge permutations may
be resampled, the \emph{resampling stage}. Once this stage terminates, no bad
event occurs, and the parties can color the resulting subgraphs locally.

It remains to check the LLL condition and substitute the parameters from
\Cref{lem:edge-lll-parameters} into the communication and round bounds of
\Cref{thm:two-party-lll}. These parameters are
$$
N=n,\qquad M=2n,\qquad p=e^{-c\Delta},\qquad
D=2\Delta^2+1,\qquad h=\Delta+1.
$$
The bad-event probability decreases exponentially in $\Delta$, while the
dependency degree grows only polynomially, so we have
\begin{equation}
\lambda :=ep(D+1)
=e\cdot e^{-c\Delta}(2\Delta^2+2) =e^{-\Omega(\Delta)}. \label{myeqq}
\end{equation}
Hence, when $\Delta$ is at least some sufficiently large constant depending
only on $\eps$, we have $\lambda\le1/2$. Thus,
\Cref{thm:two-party-lll} applies with constant slack.

\begin{lemma}[Cost of the resampling stage]
\label{lem:edge-cost}
When $\Delta$ is at least some sufficiently large constant depending only on
$\eps$, the resampling stage terminates almost surely, communicates
$O\left(ne^{-\gamma\Delta}+1\right)$ bits in expectation for some constant
$\gamma=\gamma(\eps)>0$, and uses
$O\left(\frac{\log n}{\Delta}+1\right)$ rounds in expectation.
\end{lemma}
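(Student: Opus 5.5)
The plan is to substitute the parameters of \Cref{lem:edge-lll-parameters} into \Cref{thm:two-party-lll} and simplify. By \eqref{myeqq}, for $\Delta$ at least a suitable constant depending on $\eps$ we have $\lambda\le1/2$. Hence the theorem applies with $\delta=1/2$. Termination almost surely and correctness of the output (no bad event occurs) then come directly from \Cref{thm:two-party-lll}.

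\textbf{Communication.} The theorem gives
$$
O\left(1+Mph\log\left(2+\frac{N}{Mph}\right)\right),
$$
where $Mph=2n(\Delta+1)e^{-c\Delta}$ and $N=n$. I would split according to whether $Mph\ge n$ or $Mph<n$.
\begin{itemize}
\item If $Mph\ge n$, the logarithm is $O(1)$. The bound is then $O(n\Delta e^{-c\Delta})$.
\item If $Mph<n$, write $x:=Mph/n=2(\Delta+1)e^{-c\Delta}$. The bound becomes $O(n\,x\log(3/x))$. Here $\log(1/x)\le c\Delta+O(1)$, so the bound is $O(n\Delta^2e^{-c\Delta})$.
\end{itemize}
In both cases, absorbing the polynomial factor into the exponential gives $n\Delta^2e^{-c\Delta}\le n e^{-(c/2)\Delta}$ once $\Delta$ is large. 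This yields the claimed bound with $\gamma:=c/2$. Note that $\log(2+N/(Mph))$ never exceeds $O(\Delta)$, which is why no extra $\log n$ factor appears.

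\textbf{Rounds.} The theorem gives
$$
O\left(1+\frac{\log(1+A)}{\log(1/\lambda)}\right),
\qquad A=\frac{Mp}{1-\lambda}\le 4ne^{-c\Delta}\le 4n.
$$
So $\log(1+A)=O(\log n+1)$. By \eqref{myeqq}, $\lambda=e^{-\Omega(\Delta)}$, so $\log(1/\lambda)=\Omega(\Delta)$ for large $\Delta$. Concretely, $\log(1/\lambda)\ge c\Delta-\ln(e(2\Delta^2+2))\ge c\Delta/2$. The ratio is therefore $O(\log n/\Delta)$, which gives $O(\log n/\Delta+1)$ rounds.

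The only mildly delicate step is the communication case analysis. One must confirm that the logarithmic factor contributes only $O(\Delta)$ and not $\log n$. This holds because $N/(Mph)$ is at most $e^{c\Delta}$, independent of $n$. Everything else is direct substitution, with the threshold on $\Delta$ chosen to make $\lambda\le1/2$, $\log(1/\lambda)\ge c\Delta/2$, and $\Delta^2e^{-c\Delta/2}\le1$.
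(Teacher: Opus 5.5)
Your proposal is correct and follows the paper's proof essentially step for step: substitute the parameters into the two-party LLL theorem, note that $\log(2+N/(Mph))=O(\Delta)$ because $N/(Mph)=e^{c\Delta}/(2(\Delta+1))$, absorb the resulting $\Delta^2$ factor into the exponential, and compare $\log(1+A)=O(\log n+1)$ with $\log(1/\lambda)=\Omega(\Delta)$. Your case split on whether $Mph\ge n$ is unnecessary, since the first case cannot occur for large $\Delta$ and the paper bounds the logarithm directly, as your own closing remark does, but it is harmless.
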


\begin{proof}
The LLL condition was verified above, so \Cref{thm:two-party-lll} applies. We
evaluate its communication and round bounds separately.

The theorem bounds the expected communication by
$$
O\left(
1+Mph\log\left(2+\frac{N}{Mph}\right)
\right).
$$
For our parameters,
$$
Mph=2n(\Delta+1)e^{-c\Delta},
$$
and
$$
\log\left(2+\frac{N}{Mph}\right)
=\log\left(2+\frac{e^{c\Delta}}{2(\Delta+1)}\right)
=O(\Delta).
$$
Substituting these estimates gives
\begin{align*}
O\left(
1+Mph\log\left(2+\frac{N}{Mph}\right)
\right)
&=O\left(1+2n(\Delta+1)e^{-c\Delta}\cdot\Delta\right) \\
&=O\left(1+n\Delta^2e^{-c\Delta}\right) \\
&=O\left(ne^{-\gamma\Delta}+1\right)
&&\text{for some }\gamma=\gamma(\eps)>0.
\end{align*}
Here the last step absorbs the factor $\Delta^2$ into the exponential function by
changing the constant in the exponent.

For the round complexity, \Cref{thm:two-party-lll} gives
$$
O\left(1+\frac{\log(1+A)}{\log(1/\lambda)}\right),
\qquad
A:=\frac{Mp}{1-\lambda}.
$$
Since $\lambda\le1/2$,
$$
A\le2Mp=4ne^{-c\Delta},
$$
and hence $\log(1+A)=O(\log n)$. For the denominator,
\begin{align*}
\log\frac{1}{\lambda}
&=\Omega(\Delta) & \text{by \eqref{myeqq}}.
\end{align*}
Substituting these two estimates shows that the expected number of rounds is
$O\left(\frac{\log n}{\Delta}+1\right)$. Almost-sure termination follows
directly from \Cref{thm:two-party-lll}.
\end{proof}

\subsection{Completing the proof}

The construction above uses public randomness. To obtain the private-coin
guarantee in \Cref{thm:main}, we use the following Las Vegas version of
Newman's theorem; see~\cite{Newman91} and~\cite[Exercise~3.15]{KN97}.

\begin{lemma}[Newman's theorem]
\label{lem:newman-las-vegas}
Suppose a two-party computation problem has at most $K$ possible input
pairs and admits a public-coin Las Vegas protocol that communicates $C$
bits in expectation and uses an expected number of $R$ rounds. Then it
admits a private-coin Las Vegas protocol that communicates
$O(C+\log\log K)$ bits in expectation and uses an expected
number of $O(R+1)$ rounds.
\end{lemma}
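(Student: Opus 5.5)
We prove the Las Vegas Newman theorem with the standard Newman sampling argument, applied to a version of the protocol whose communication is truncated and restarted, so that Markov's inequality controls the bad seeds. Let $\Pi$ be the public-coin protocol, with expected communication $C$ and expected rounds $R$. For a threshold, say $4C+1$ bits (and similarly $4R+1$ rounds), define a \emph{truncated attempt}: run $\Pi$ with a given public random string, and abort as soon as the transcript exceeds the threshold. Both parties detect the abort simultaneously, since the transcript is common knowledge. Because $\Pi$ is Las Vegas, any non-aborted run outputs a correct answer. By Markov's inequality, for every input pair $(x,y)$ a uniformly random public string gives a successful attempt with probability at least $1/2$.

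\textbf{Reducing the randomness.} The key step is to replace the infinite public string by a small family of strings. Pick $t$ independent public strings $r_1,\ldots,r_t$. For a fixed input pair, the number of indices $i$ on which the truncated attempt with $r_i$ fails has mean at most $t/2$, counting both aborted runs and runs that do not finish within the threshold. By a Chernoff bound, the fraction of failing strings exceeds $3/4$ with probability $e^{-\Omega(t)}$. Taking $t=\Theta(\log K)$ and a union bound over the $K$ input pairs, some fixed family $r_1,\ldots,r_t$ works for every input: for each input pair, at least a $1/4$ fraction of the indices give successful attempts. This family is hard-wired into the protocol.

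\textbf{The private-coin protocol.} Alice privately samples a uniform index $i\in[t]$ and sends it using $O(\log t)=O(\log\log K)$ bits in one round, merging it with her first message if needed. The parties then run the truncated attempt with $r_i$. On an abort, Alice samples and sends a fresh index, and the parties repeat. Each trial succeeds with probability at least $1/4$, independently of the earlier trials, so the expected number of trials is at most $4$. Each trial costs $O(C+1+\log\log K)$ bits and $O(R+1)$ rounds. Termination with probability one and correctness follow because outputs are produced only by non-aborted runs of $\Pi$.

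\textbf{Main obstacle.} The delicate point is Las Vegas correctness together with the expectation bounds when $\Pi$ may have unbounded runs. Truncation handles this: a truncated attempt is a bounded object whose success is a deterministic function of the input and the seed, so the counting argument applies. We must also keep the additive $+1$ terms so that the case $C=0$ is covered, and make sure that retransmitting the index in each trial costs only $O(\log\log K)$ bits per trial, hence $O(\log\log K)$ in expectation overall.
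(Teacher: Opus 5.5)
Your argument is correct, and it is the standard proof. The paper does not prove this lemma; it cites Newman and Kushilevitz--Nisan, Exercise~3.15, and your argument is what that citation stands for: Markov truncation to get a bounded attempt that either succeeds or visibly aborts, with success probability at least $1/2$; Newman-style sampling of $t=\Theta(\log K)$ seeds via a Chernoff and union bound; then restarting with a fresh privately chosen index, at $O(\log t)=O(\log\log K)$ bits per trial. Your bound $O(C+1+\log\log K)$ coincides with the stated $O(C+\log\log K)$ whenever $\log\log K\ge 1$, which covers every use in the paper. Treating success as a deterministic function of the input and the seed quietly assumes the public-coin protocol uses no private coins, which is harmless since they can be folded into the public string.
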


We are now ready to prove \Cref{thm:main}.

\thmmain*

\begin{proof}
Choose $\Delta_0$ sufficiently large that the parameter estimates and the LLL
condition established above hold whenever $\Delta\ge\Delta_0$.

\paragraph{Public-coin cost.}
Use the random variables and routing rule from \Cref{sec:construction}. Before
coloring the routed subgraphs, perform the resampling stage defined in
\Cref{sec:edge-lll-invocation}. By \Cref{lem:edge-cost}, this stage terminates
almost surely using
$O(ne^{-\gamma\Delta} + 1)$ expected communication and
$O\left(\frac{\log n}{\Delta}+1\right)$ expected rounds, for some constant
$\gamma=\gamma(\eps)>0$.

\paragraph{Correctness.}
At termination, no bad event occurs. Hence, every graph
$G_j^A$ and $G_j^B$ has maximum degree at most $L$, while $G_A^F$ and $G_B^F$
have maximum degree at most $\xi\Delta$. Since these degrees are integers, the
two fallback graphs have maximum degree at most $\lfloor\xi\Delta\rfloor$.
By Vizing's theorem, $G_j^A$ and $G_j^B$ can each be colored using $P_j$, and
$G_A^F$ and $G_B^F$ can be colored using $P_A^F$ and $P_B^F$, respectively.

We check that these local colorings fit together. Two edges in the same routed
subgraph receive different colors whenever they share an endpoint because that
subgraph is properly colored. Two edges routed to different palettes receive
colors from disjoint sets. The only remaining possibility is an edge of
$G_j^A$ and an edge of $G_j^B$ that use the same regular palette. Such edges
cannot share an endpoint $v$, because that would imply
$j\in S_A(v)\cap S_B(v)$, whereas these two permission sets are disjoint.
Therefore, the union of the local colorings is a proper edge coloring of $G$.

By \Cref{lem:palette}, the total number of colors is at most
$(1+\eps)\Delta$. The coloring step is performed locally after
the resampling stage and requires no further communication. Every terminating
execution therefore outputs a valid coloring, so the protocol is Las Vegas.
This proves the public-coin part of the theorem.

\paragraph{Private randomness.}
An input pair is determined by assigning each
of the $\binom n2$ possible edges to Alice, to Bob, or to neither party.
Hence the number $K$ of possible input pairs satisfies
$K\le 3^{\binom n2}$, and therefore $\log\log K=O(\log n)$.
Applying \Cref{lem:newman-las-vegas} to the public-coin protocol above gives
a private-coin Las Vegas protocol with expected communication complexity
$$
O\left(ne^{-\gamma\Delta}+\log n\right).
$$
Its expected number of rounds remains
$O\left(\frac{\log n}{\Delta}+1\right)$, completing the proof.
\end{proof}
\section{Further applications}
\label{sec:applications}

The purpose of this section is to illustrate the broad applicability of our
two-party LLL protocol. We do not claim novelty for the applications
themselves. Indeed, for each problem considered in this section, we just take a standard LLL
formulation and examine the communication guarantee obtained from
\Cref{thm:two-party-lll} when the constraints are divided between Alice and
Bob.

In each application, the variables and their domains are public, while Alice
and Bob hold disjoint private sets of constraints. Their task is to agree on
one assignment satisfying all constraints, without first exchanging their
complete constraint descriptions. A linear communication bound does not
follow immediately from the number of variables, because the number of
constraints can be much larger. The key calculation in each case is that the
small probability of a bad event compensates for the number of bad events and
the number of shared variables involved in each one, giving $Mph=O(N)$.

We apply \Cref{thm:two-party-lll} directly to each problem, first obtaining a
public-coin protocol. Since the private inputs in the applications below come
from finite input spaces, we then apply \Cref{lem:newman-las-vegas} to obtain
private-coin protocols. In applying Newman's theorem, we fix all information
that is public to both parties and let $K$ count only the possible pairs of
private inputs.

\subsection{Satisfiability}

Let $x_1,\ldots,x_N$ be a public set of Boolean variables. Alice and Bob hold
disjoint private collections $\mathcal C_A$ and $\mathcal C_B$ of clauses,
each containing exactly $k$ distinct variables. Together, these clauses form
a $k$-CNF formula. Suppose that every variable occurs in at most $d$ clauses
in $\mathcal C_A\cup\mathcal C_B$, where $N$, $k$, and $d$ are known to both
parties. Their task is to agree on a satisfying assignment without exchanging
their complete clause sets.

When $d$ is large, the formula may contain far more than $N$ clauses. Let
$m$ denote the number of clauses. Since each clause contains $k$ variables,
while each variable appears in at most $d$ clauses, double-counting
variable--clause incidences gives $m\le Nd/k$. Thus, even communicating a
constant number of bits for each clause may require $\omega(N)$ bits of communication, and it is not immediate how the parties can find a satisfying
assignment without exchanging a large amount of information about their
clauses. The LLL formulation is what overcomes this difficulty. A uniformly
random assignment violates any fixed clause with probability only $2^{-k}$,
and this exponentially small probability compensates for the large value of
$d$ in the communication bound.

\begin{corollary}[Clause-partitioned $k$-SAT with bounded variable occurrences]
\label{cor:k-sat}
If
$
d\le\frac{2^{k-2}}{k}
$, 
then Alice and Bob can find a satisfying assignment using a private-coin Las
Vegas protocol with $O(N)$ expected bits of communication and $O(\log N)$
expected rounds.
\end{corollary}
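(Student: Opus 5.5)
We apply \Cref{thm:two-party-lll} to the standard LLL formulation and then \Cref{lem:newman-las-vegas}.

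\textbf{The instance.} All variables $x_1,\ldots,x_N$ are shared, so $\mathcal X=\{x_1,\ldots,x_N\}$ and $\mathcal V_A=\mathcal V_B=\varnothing$. Each $x_i$ is an independent uniform bit. For each clause $C\in\mathcal C_A$ (resp.\ $\mathcal C_B$), Alice (resp.\ Bob) holds the bad event that $C$ is violated. Each party can evaluate its own events from the shared values, as the model requires. The parameters are:
\begin{itemize}
\item $p=2^{-k}$;
\item $h=k$;
\item $M=\lfloor Nd/k\rfloor$, by the double-counting bound $m\le Nd/k$; this is known to both parties since $N,d,k$ are public;
\item $D=k(d-1)$, since a clause shares variables only with clauses through its $k$ variables, each lying in at most $d-1$ other clauses.
\end{itemize}

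\textbf{The LLL condition.} We have
$$\lambda=ep(D+1)\le e2^{-k}\bigl(k(d-1)+1\bigr)\le e2^{-k}kd\le e/4<1.$$
This gives constant slack $\delta=1-e/4$. It needs $k(d-1)+1\le kd$, which holds since $k\ge1$. We may assume $Mph>0$, i.e.\ the trivial cases are excluded, as remarked before the theorem.

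\textbf{Communication.} Substituting,
$$Mph\le (Nd/k)\cdot 2^{-k}\cdot k=Nd2^{-k}\le N/(4k)\le N.$$
The function $x\mapsto x\log(2+N/x)$ is increasing in $x$ and equals $O(N)$ at $x=N$. So the bound of \Cref{thm:two-party-lll} is $O(1+N)=O(N)$.

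\textbf{Rounds.} We have $A=Mp/(1-\lambda)=O(N)$, and $\log(1/\lambda)\ge\log(4/e)$ is a positive constant. Hence the expected number of rounds is $O(1+\log(1+N))=O(\log N)$.

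\textbf{Private coins.} With $N,k,d$ fixed and public, a private input pair assigns each possible $k$-clause over $N$ variables to Alice, to Bob, or to neither. There are at most $\binom{N}{k}2^k\le (2N)^k$ such clauses. Hence $\log K\le (2N)^k\log 3$, and $\log\log K=O(k\log N)$.

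The main obstacle is making this overhead $O(N)$. Note that $d\ge1$ whenever any clause exists, and $d\le 2^{k-2}/k$ forces $2^{k-2}\ge k$, so $k\ge4$. Also each clause has $k\le N$ distinct variables. I would split into two cases:
\begin{itemize}
\item If $k\le N/\log N$, then $k\log N=O(N)$ directly.
\item Otherwise, I would count inputs more cleverly. The total number of clauses is at most $m\le Nd/k\le N2^{k-2}/k^2$. So an input is determined by a set of at most $m$ clauses from at most $(2N)^k$ candidates, together with an owner for each. This gives roughly $\log K\le m(k\log(2N)+2)$. However, $m$ can be exponential in $k$, so this alone does not obviously help.
\end{itemize}
Instead, I would use $\log\log K\le \log\bigl((2N)^k\log 3\bigr)=O(k\log N)$ together with $k\le N$, and hope for $k\log N=O(N)$. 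This can fail when $k$ is close to $N$, e.g.\ $k=N$ gives $N\log N$. A cleaner fix in that case is the crude bound $\log K\le 3^{\,\#\text{possible clauses}}$, which is again too large.

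So the fallback I would actually pursue is to exploit that the $2$-round prefix could simply have Alice send her $N$-bit assignment. Concretely, if $k>N/\log N$, then $D+1\le kd$ and $M\le Nd/k$ allow a direct argument. Alice finds an assignment satisfying her clauses and sends it with $N$ bits; she repeats with fresh private randomness only if Bob reports a violation. This takes $O(1)$ expected repetitions, because $Mp\le N2^{-k}\cdot 2^{k-2}/k^2=o(1)$ in this regime, so a random assignment already satisfies everything with constant probability. This yields $O(N)$ bits and $O(1)$ expected rounds without Newman.

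I expect this private-coin bookkeeping, rather than the LLL substitution, to be the delicate step.
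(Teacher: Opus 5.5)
Your LLL instance, the parameters ($p=2^{-k}$, $h=k$, $D=k(d-1)$, $M=Nd/k$), the check $ep(D+1)\le e/4$, and the resulting $O(N)$-bit, $O(\log N)$-round public-coin bounds match the paper's proof exactly. The only divergence is the private-coin step, which you made much harder than necessary.

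Since $k\le N$, the number of possible clauses satisfies $\binom{N}{k}2^k\le 2^N\cdot 2^N=4^N$. Hence $K\le 3^{4^N}$ and $\log\log K\le 2N+O(1)$ for every $k$. That one line is all the paper uses. Your estimate $(2N)^k$ is loose when $k$ is comparable to $N$, and it alone created the apparent obstacle. Also, your ``crude bound'' should bound $K$, not $\log K$, by $3$ to the number of possible clauses.

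Nevertheless, the route you finally commit to is valid. For $k\le N/\log N$, Newman's overhead $O(k\log N)$ is $O(N)$. For $k>N/\log N$ (with $k\ge4$), you correctly get $Mp\le N/(4k^2)\le 1/4$. So a sample-and-check protocol needs only $O(1)$ expected trials, each costing $N+O(1)$ bits.

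You should state the sampling rule explicitly, though. The phrase ``Alice finds an assignment satisfying her clauses'' does not by itself justify a constant success probability. Say instead that Alice draws a uniform assignment and privately rejection-samples until her own clauses hold. Then all of Bob's clauses hold with probability at least $1-Mp/(1-Mp)\ge 2/3$.

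Your case split does buy an explicit private-coin protocol without Newman in the large-$k$ regime. But the paper's uniform bound on $\log\log K$ is simpler and handles all $k$ at once.
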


\begin{proof}
If $d=0$, there are no clauses, and the claim is immediate. Hence, assume that
$d\ge1$. Assign every variable independently and uniformly at random,
treating all $N$ variables as shared. For each clause $C$, its owner defines
the bad event $B_C$ that $C$ is violated. Each bad event has probability
$p=2^{-k}$ and depends on $h=k$ shared variables.

Two bad events can be dependent only if their clauses share a variable. A
clause contains $k$ variables, and each of them occurs in at most $d-1$ other
clauses. We may therefore take $D=k(d-1)$. Since
$$
D+1=k(d-1)+1\le kd,
$$
the assumption on $d$ gives
$$
ep(D+1)
\le e2^{-k}kd
\le\frac{e}{4}<1.
$$
Thus, the LLL condition holds with constant slack.

Let $m:=|\mathcal C_A|+|\mathcal C_B|$ be the total number of clauses.
Double-counting variable--clause incidences gives $km\le Nd$, so we may use
the public upper bound $M:=Nd/k$. For these parameters,
$$
Mph
=\frac{Nd}{k}\cdot2^{-k}\cdot k
=Nd2^{-k}
\le\frac{N}{4k}
=O(N).
$$
This is the point at which the factor $2^{-k}$ compensates for the potentially
large number of clauses. By \Cref{thm:two-party-lll}, there is a public-coin
protocol using $O(N)$ expected bits of communication. Moreover,
$Mp\le Mph=O(N)$ and $ep(D+1)\le e/4$, so the expected number of rounds is
$O(\log N)$.

There are at most $\binom Nk2^k$ possible clauses, and each such clause can
belong to Alice, belong to Bob, or be absent. Hence the number $K$ of possible
input pairs satisfies
$$
K\le3^{\binom Nk2^k}.
$$
Since $k\le N$, we have $\log\log K=O(N)$. Applying
\Cref{lem:newman-las-vegas} gives the claimed private-coin protocol without
changing the asymptotic communication or round complexity.
\end{proof}

Gebauer, Szab\'o, and Tardos~\cite{GebauerST16} determined the asymptotic
threshold for $k$-SAT with bounded variable occurrences. The largest number
of occurrences per variable for which every $k$-CNF formula is satisfiable is
$(2/e+o(1))2^k/k$. Their guarantee uses the lopsided LLL with a carefully
chosen biased product distribution. Our simpler symmetric-LLL argument gives
the same asymptotic dependence on $k$, although it does not recover the
optimal constant.

\subsection{Hypergraph 2-coloring}

Let $V$ be a public vertex set, and let $k\ge1$ and $D\ge0$ be integers known
to both parties. Alice and Bob hold disjoint private families
$\mathcal E_A$ and $\mathcal E_B$ of $k$-element subsets of $V$. Together
they form the $k$-uniform hypergraph
$H=(V,\mathcal E_A\cup\mathcal E_B)$. The parties must agree on a red--blue
coloring of $V$ with no monochromatic hyperedge. Suppose every hyperedge
intersects at most $D$ other hyperedges.

Hypergraph 2-coloring is one of the classical applications of the LLL and
has been studied extensively from an algorithmic perspective. Radhakrishnan
and Srinivasan~\cite{RadhakrishnanSrinivasan00} obtained improved extremal
bounds in terms of the number of hyperedges, together with fast randomized
algorithms and deterministic parallel versions. Local computation algorithms
for the problem were developed in a sequence of
works~\cite{RubinfeldTamirVardiXie11,AlonRubinfeldVardiXie12,
MansourRubinsteinVardiXie12,DorobiszKozik23}.

When $D$ is large, the hypergraph may contain far more than $|V|$
hyperedges. Indeed, each hyperedge contains $k$ vertices, while each vertex
belongs to at most $D+1$ hyperedges. Double-counting vertex--hyperedge
incidences therefore gives
$|\mathcal E_A|+|\mathcal E_B|\le |V|(D+1)/k$. Thus, a protocol using
$O(|V|)$ bits cannot afford to communicate even a constant amount of
information about every hyperedge. The LLL formulation overcomes this
difficulty: under a uniformly random 2-coloring, a fixed hyperedge is
monochromatic with probability only $2^{1-k}$, and this exponentially small
probability compensates for the potentially large value of $D$.

\begin{corollary}[Partitioned hypergraph 2-coloring]
\label{cor:hypergraph}
If $D+1\le2^{k-3}$, then Alice and Bob can find a proper 2-coloring using a
private-coin Las Vegas protocol with $O(|V|)$ expected bits of communication
and $O(\log |V|)$ expected rounds.
\end{corollary}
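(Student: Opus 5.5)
We follow the template of \Cref{cor:k-sat}. If there are no hyperedges the claim is immediate, so assume otherwise and $k\ge1$. Color every vertex of $V$ independently and uniformly red or blue, treating all $N:=|V|$ colors as shared variables. For each hyperedge $e$, its owner defines the bad event $B_e$ that $e$ is monochromatic. Then $\Pr[B_e]=2\cdot2^{-k}=2^{1-k}$, so we take $p:=2^{1-k}$ and $h:=k$. Two events are dependent only if their hyperedges intersect, so the given bound $D$ on the number of intersecting hyperedges is a valid dependency degree. The LLL condition then reads
$$
ep(D+1)\le e\,2^{1-k}\,2^{k-3}=\frac e4<1,
$$
so it holds with constant slack $\delta=1-e/4$.

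Next, we choose a public upper bound $M$ on the number of bad events. Each vertex lies in at most $D+1$ hyperedges, since all hyperedges containing a fixed vertex pairwise intersect. Double counting incidences therefore gives $M:=N(D+1)/k$. We then have
$$
Mph=\frac{N(D+1)}{k}\cdot2^{1-k}\cdot k=2N(D+1)2^{-k}\le\frac N4=O(N),
$$
so the logarithmic factor in \Cref{thm:two-party-lll} is at least a constant and the communication is $O(1+N)=O(N)$. This bound on $Mph$ is the main step. Its key point is that the factor $2^{-k}$ absorbs the potentially large $D$, in the same way as in the $k$-SAT case. For rounds, $A=Mp/(1-\lambda)\le 4Mp/(4-e)=O(N)$ and $\lambda\le e/4$, so the expected number of rounds is $O(1+\log(1+N))=O(\log N)$.

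Finally, we pass to private coins by \Cref{lem:newman-las-vegas}. There are at most $\binom{N}{k}\le 2^N$ possible hyperedges, and each belongs to Alice, to Bob, or to neither. Hence $K\le 3^{2^N}$ and $\log\log K=O(N)$. This gives the claimed private-coin protocol with $O(N)$ expected bits and $O(\log N)$ expected rounds. No step seems genuinely difficult. The only points needing care are the justification that each vertex lies in at most $D+1$ hyperedges, and the degenerate small cases, such as $k\le3$ forcing $D+1\le1$, which the bounds above still cover.
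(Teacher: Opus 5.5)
Your proof is correct and follows the paper's argument essentially step by step. It uses the same shared random bits, the same values $p=2^{1-k}$ and $h=k$, the same observation that each vertex lies in at most $D+1$ hyperedges (giving $M=N(D+1)/k$ and $Mph\le N/4$), and the same application of Newman's theorem with $\log\log K=O(N)$; your extra step $\binom Nk\le 2^N$ is harmless.

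One wording fix concerns the communication bound. The $O(N)$ bound holds because $x\mapsto x\log(2+N/x)$ is nondecreasing, so at $x=Mph\le N/4$ it is at most $(N/4)\log 6$. It does not follow from the logarithmic factor being at least a constant, since that gives a lower bound rather than an upper bound.
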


\begin{proof}
Give every vertex $v$ one shared unbiased random bit
$X_v\in\{0,1\}$, where the two values represent the two colors. For each
hyperedge $e$, its owner defines the bad event $B_e$ that all vertices in $e$
receive the same color. Thus, $p=2^{1-k}$, every bad event depends on at most
$h=k$ shared variables, and the dependency degree is at most $D$. The LLL
condition holds with constant slack, since
\begin{align*}
ep(D+1)
&=e2^{1-k}(D+1)
&&\text{since $\Prb[B_e]=2^{1-k}$}\\
&\le\frac{e}{4}<1
&&\text{since $D+1\le2^{k-3}$.}
\end{align*}

Let $N:=|V|$ and $m:=|\mathcal E_A|+|\mathcal E_B|$. If a vertex belongs to
$r$ hyperedges, then each of these hyperedges intersects the other $r-1$.
Therefore, every vertex belongs to at most $D+1$ hyperedges. Double-counting
vertex--hyperedge incidences gives
\begin{align*}
km
&=\sum_{v\in V}\deg_H(v)\\
&\le N(D+1)
&&\text{since $\deg_H(v)\le D+1$ for every $v$.}
\end{align*}
Thus, the number of bad events is at most $N(D+1)/k$, so in
\Cref{thm:two-party-lll} we may use the public upper bound
$M:=N(D+1)/k$. Consequently,
\begin{align*}
Mph
&=N(D+1)2^{1-k}\\
&\le\frac{N}{4}
&&\text{since $D+1\le2^{k-3}$.}
\end{align*}
Thus, the exponentially small probability that a hyperedge is monochromatic
compensates for the potentially large number of hyperedges. Applying
\Cref{thm:two-party-lll} gives a public-coin protocol using $O(N)$ expected
bits. Moreover, $ep(D+1)\le e/4<1$ and $Mp\le Mph=O(N)$, so the expected
number of rounds is $O(\log N)$.

There are $\binom Nk$ possible hyperedges, and each is owned by Alice, owned
by Bob, or absent. Thus, the total number of possible input pairs is $K\le3^{\binom Nk}$, so $\log\log K=O(N)$.
Applying \Cref{lem:newman-las-vegas} preserves the $O(N)$ communication bound
and gives the claimed private-coin protocol.
\end{proof}

More generally, the calculation above gives
$Mph\le N(D+1)2^{1-k}$. Hence, when $D+1=o(2^k)$, the more precise
communication bound in \Cref{thm:two-party-lll} is $o(N)$. However, converting the
protocol to a private-coin one adds $O\left(\log\binom Nk\right)$ bits, which
may dominate this sublinear bound.
\subsection{List coloring with bounded color degree}

Let $V$ be a public set of $n$ vertices. Alice is given an edge set $E_A$,
and Bob is given a disjoint edge set $E_B$. We assume that their union forms
a simple graph
$
G=(V,E_A\cup E_B)
$.
Each vertex $v\in V$ has a nonempty finite public list $P(v)$ of admissible
colors. For $v\in V$ and $c\in P(v)$, define the \emph{color degree}
$$
d_c(v):=\left|\left\{u\in N_G(v):c\in P(u)\right\}\right|.
$$
Thus, $d_c(v)$ counts the neighbors of $v$ that could conflict with $v$ if
$v$ receives color $c$. This quantity can be much smaller than the degree of
$v$ when the lists of neighboring vertices have little overlap.

Color-degree conditions for list coloring are classical. Haxell
\cite{Haxell01} proved that lists of size $2d$ suffice when every color degree
is at most $d$. Reed and Sudakov~\cite{ReedSudakov02} later showed that lists
of size $(1+o(1))d$ suffice as $d\to\infty$. These results give stronger
existence guarantees than the elementary symmetric LLL used below. Our
purpose here is to examine the communication bound obtained from the standard
LLL formulation.

\begin{corollary}[Bounded-color-degree public-list coloring]
\label{cor:color-degree}
Let $d\ge1$ and $Q\ge8d$ be integers known to both parties. Suppose
$d_c(v)\le d$ for every vertex $v$ and every color $c\in P(v)$, and
$|P(v)|\ge Q$ for every $v$. Then Alice and Bob can find a proper list
coloring using a private-coin Las Vegas protocol with
$
O\left(
1+\frac{nd}{Q}\log\left(2+\frac{Q}{d}\right)
\right)
=O(n)
$ 
expected bits of communication and $O(\log n)$ expected rounds.
\end{corollary}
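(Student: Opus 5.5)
The plan is to apply \Cref{thm:two-party-lll} directly to the standard LLL formulation, and then use \Cref{lem:newman-las-vegas} to pass to private coins. Each vertex $v$ first fixes, via a public rule, a sublist $P'(v)\subseteq P(v)$ of exactly $Q$ colors (for instance, the $Q$ smallest under a fixed public order). Restricting lists only decreases color degrees, so $d_c(v)\le d$ still holds for $c\in P'(v)$. For each $v$ we introduce one shared variable $X_v$, uniform on $P'(v)$, so $N=n$. For every edge $uv$ and every color $c\in P'(u)\cap P'(v)$, the owner of $uv$ defines the bad event $B_{uv,c}=\{X_u=X_v=c\}$. Its probability is $1/Q^2$, so we take $p=Q^{-2}$. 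It depends on two shared variables, so $h=2$. Each owner can evaluate his or her own events, and avoiding all of them is exactly a proper list coloring.

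\textbf{Dependency degree.} An event $B_{uv,c}$ depends on $X_u$ and $X_v$. Consider the events involving $X_u$. They are indexed by pairs $(c',w)$ with $c'\in P'(u)$ and $w$ a neighbor of $u$ having $c'\in P'(w)$. For each $c'$ there are at most $d_{c'}(u)\le d$ such $w$, so at most $Qd$ events involve $X_u$. The same holds for $X_v$. Subtracting the event itself gives $D+1\le 2Qd$. Therefore
\[
\lambda=ep(D+1)\le \frac{2eQd}{Q^2}=\frac{2ed}{Q}\le\frac{e}{4}<1,
\]
since $Q\ge 8d$. This gives constant slack.

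\textbf{Counting events.} By the same count, each vertex lies in at most $Qd$ events, so we may take the public bound $M:=nQd$ (dividing by two is unnecessary). Then
\[
Mph=\frac{2nQd}{Q^2}=\frac{2nd}{Q}.
\]
Substituting into \Cref{thm:two-party-lll} gives communication
\[
O\left(1+\frac{nd}{Q}\log\left(2+\frac{Q}{2d}\right)\right).
\]
Because $x\log(2+1/x)$ is bounded for $x=d/Q\le 1/8$, this is $O(n)$. For the rounds, $A=Mp/(1-\lambda)=O(nd/Q)=O(n)$ and $\log(1/\lambda)\ge\log(4/e)$ is a constant, so the expected number of rounds is $O(\log n)$.

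\textbf{Private coins.} This is the step I expect to be the main obstacle, since Newman's overhead $\log\log K$ must not exceed the bound. The lists are public, so $K$ counts only the ways of assigning each of the $\binom n2$ pairs to Alice, to Bob, or to neither. Hence $K\le 3^{\binom n2}$ and $\log\log K=O(\log n)$. The stated expression is at least a constant, but the bound $O(1+\frac{nd}{Q}\log(\cdot))$ alone may be smaller than $\log n$. I would therefore state the private-coin communication as the public bound plus $O(\log n)$, which is still $O(n)$, and read the displayed expression accordingly; alternatively, one might hope to absorb the term when $nd/Q\ge\log n$. The rounds remain $O(\log n)$ under \Cref{lem:newman-las-vegas}.
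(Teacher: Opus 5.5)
Your LLL formulation is correct and is exactly the paper's. You use a shared $X_v$ uniform on a public truncation of $P(v)$ to $Q$ colors, and events $B_{uv,c}$ with $p=Q^{-2}$, $h=2$, $D+1\le 2Qd$. This gives $Mph=\Theta(nd/Q)$, and your round bound $O(\log n)$ is also correct. Your looser choice $M=nQd$ instead of the paper's $nQd/2$ only changes constants.

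The gap is the one you flagged yourself. You never establish the private-coin bound $O\bigl(1+\frac{nd}{Q}\log(2+\frac{Q}{d})\bigr)$. Instead you propose replacing it by that bound plus $O(\log n)$, which means proving a weaker statement. Your concern is genuine: for $d=1$ and $Q=n^2$ the public-coin bound is $O(1)$, so the $O(\log n)$ overhead from \Cref{lem:newman-las-vegas} cannot be absorbed there. Your suggested condition $nd/Q\ge\log n$ is also not the right one. For example, with $d=1$ and $Q=n/2$ we have $nd/Q=2$, yet the main term $2\log(2+n/2)$ still dominates $\log n$ through its logarithmic factor. That condition also says nothing about what to do when it fails.

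The missing idea is a case split on $Q$ versus $n$.
\begin{itemize}
\item If $Q\ge n$, no randomness and no communication are needed. Process the vertices in a fixed public order, and let each vertex take a color from its public list not used by any earlier vertex. This is possible since $|P(v)|\ge Q\ge n$. All vertices receive distinct colors, so the coloring is proper whatever the edge sets are, and zero bits is within $O(1)$.
\item If $Q<n$, then $d\ge1$ gives $x:=Q/d<n$. Since $x\mapsto\log(2+x)/x$ is decreasing,
\[
\frac{nd}{Q}\log\left(2+\frac{Q}{d}\right)=n\cdot\frac{\log(2+x)}{x}\ge\log(2+n),
\]
so the $O(\log n)$ Newman overhead is dominated by the main term.
\end{itemize}
With this split added, your argument proves the corollary as stated.
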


\begin{proof}
If $Q\ge n$, Alice and Bob can use the public lists to greedily assign a
distinct color to every vertex without communication. Hence, we may assume
that $Q<n$ for the rest of the proof.

By truncating the public lists, we may assume that $|P(v)|=Q$ for every
vertex $v$. This cannot increase any color degree. For each vertex $v$, let
$X_v$ be a shared random variable chosen uniformly from $P(v)$. For every
edge $uv$ and every color $c\in P(u)\cap P(v)$, define the bad event
$$
B_{uv,c}:=\{X_u=X_v=c\}.
$$
The owner of $uv$ owns all events $B_{uv,c}$. Avoiding these events is
equivalent to finding a proper list coloring. Each event has probability
$p=1/Q^2$ and depends on the two shared variables $X_u$ and $X_v$, so we may
take $h=2$.

We next bound the dependency degree $D$. The number of bad events involving a
fixed variable $X_v$ is
\begin{align*}
\sum_{u\in N_G(v)}|P(u)\cap P(v)|
&=\sum_{c\in P(v)}d_c(v)\\
&\le Qd
&&\text{since $|P(v)|=Q$ and $d_c(v)\le d$.}
\end{align*}
Every bad event involves two variables and is therefore dependent on at most
$2Qd-2$ other bad events. We may thus take
$
D:=2Qd-2
$.
The LLL condition holds with constant slack, since
\begin{align*}
ep(D+1)
&\le\frac{e(2Qd-1)}{Q^2}\\
&<\frac{2ed}{Q}
\le\frac{e}{4}<1
&&\text{since $Q\ge8d$.}
\end{align*}

Let $m$ be the number of bad events. Each event $B_{uv,c}$ involves one
variable at each endpoint of $uv$, so double-counting event--variable
incidences gives
\begin{align*}
2m
&=\sum_{v\in V}\sum_{c\in P(v)}d_c(v)\\
&\le nQd
&&\text{since each of the $nQ$ color degrees is at most $d$.}
\end{align*}
Thus, in \Cref{thm:two-party-lll}, we may use the public upper bound
$M:=nQd/2$. Since the number of shared variables is $N=n$, while
$p=1/Q^2$ and $h=2$, we obtain
\begin{align*}
Mph
&=\frac{nQd}{2}\cdot\frac{1}{Q^2}\cdot2\\
&=\frac{nd}{Q}.
\end{align*}
Substituting these parameters into \Cref{thm:two-party-lll} gives a
public-coin protocol using
$$
O\left(
1+\frac{nd}{Q}\log\left(2+\frac{Q}{d}\right)
\right)
$$
expected bits of communication. Moreover, $ep(D+1)\le e/4$ and
$Mp=nd/(2Q)=O(n)$, so the expected number of rounds is $O(\log n)$.

It remains to remove the public randomness. Fix the public vertex set and
lists. An input pair is determined by assigning each of the
$\binom n2$ possible edges to Alice, to Bob, or to neither party. Hence, the
number $K$ of possible input pairs satisfies
$
K\le3^{\binom n2}
$,
and therefore $\log\log K=O(\log n)$. Applying
\Cref{lem:newman-las-vegas} adds $O(\log n)$ expected bits and preserves the
asymptotic round bound.

We show that the additional $O(\log n)$ expected bits are covered by the
claimed bound. Since $d\ge1$ and $Q<n$, and the function
$x\mapsto\log(2+x)/x$ is decreasing for $x>0$, we have
$$
\frac{nd}{Q}\log\left(2+\frac{Q}{d}\right)
\ge\log(2+n)
=\Omega(\log n).
$$
Thus, the private-coin conversion does not change the asymptotic
communication bound.
\end{proof}

We highlight that the communication cost becomes sublinear when the lists are large compared with
the color-degree bound. Indeed, if $Q/d=\omega(1)$, then
$$
\frac{nd}{Q}\log\left(2+\frac{Q}{d}\right)=o(n).
$$

As a special case, suppose every vertex has the same list $[Q]$. The color
degree is then at most $\Delta$, so taking $Q=8\Delta$ gives an alternative
proof that $O(\Delta)$-vertex coloring requires only $O(n)$ communication.
Previous work~\cite{FlinMittal25,ChangMNS25} obtains the same communication
bound using the smaller palette of $\Delta+1$ colors.

\section{Conclusion and open problems}
\label{sec:conclusion}
We have shown that a $(1+\eps)\Delta$-edge coloring can be
computed with surprisingly little coordination between the two parties. With public randomness, the
expected communication is \emph{sublinear} when $\Delta=\omega(1)$ and \emph{constant} when $\Delta\ge C_\varepsilon\log n$,
for a sufficiently large constant $C_\varepsilon$. Converting the protocol to use only private randomness costs an
additional $O(\log n)$ expected bits.

A main ingredient of this result is our communication-efficient constructive LLL for two-party partitioned instances. The theorem applies under the symmetric LLL condition with constant slack and extends beyond edge coloring, as illustrated by the additional applications in this paper. We expect that the partitioned LLL will be useful for other problems in which locally defined constraints are distributed between two parties.

We conclude with several open problems.

\paragraph{Approaching Vizing's bound.}
How far can the palette size be pushed toward Vizing's bound $\Delta+1$ without substantially increasing the communication?
Can one obtain a $\Delta+O(1)$ or $\Delta+1$ edge coloring with $O(n)$ expected communication?
More ambitiously, is sublinear communication possible in this regime when $\Delta$ grows with $n$?
It would also be interesting to remove the assumption that $\Delta$ is at least a sufficiently large constant.

\paragraph{Communication as a function of the degree.}
For fixed $\eps>0$, our upper bound shows that the communication complexity of $(1+\eps)\Delta$-edge coloring decreases rapidly as $\Delta$ grows, while the $\Omega(n)$ lower bound of~\cite{ChangMNS25} applies to constant-degree graphs.
A natural goal is therefore to determine the tight communication complexity as a function of both $n$ and $\Delta$.
In particular, is the dependence $ne^{-\Omega(\Delta)}$ in our upper bound optimal?
The same question is interesting for $\Delta+1$ vertex coloring.
Its randomized communication complexity is known to be $\Theta(n)$ in general~\cite{FlinMittal25}, but the known linear lower bound is again witnessed by bounded-degree graphs.
It remains open whether the communication complexity decreases with $\Delta$ for vertex coloring as well.

\paragraph{Deterministic protocols.}
Can the sublinear communication achieved by our edge-coloring protocol when $\Delta=\omega(1)$ also be obtained deterministically?
Both our constructive LLL and the edge-coloring application use randomness in an essential way.
A deterministic analogue does not seem to follow directly from our approach and may require substantially different ideas.

\section*{AI disclosure}
We used ChatGPT (OpenAI) and Claude (Anthropic) extensively in the development and writing of this paper. The core proof ideas came from ChatGPT and were subsequently developed into complete proofs and written up for the final manuscript through an iterative collaboration among the human authors, ChatGPT, and Claude. The human authors have independently verified the correctness and originality of all results in the paper and take full responsibility for its contents.

\bibliographystyle{alpha}
\bibliography{refs}

\end{document}